\documentclass[11pt]{article}

\usepackage[american]{babel}
\usepackage[margin=1in]{geometry}
\usepackage[round]{natbib}
\renewcommand{\refname}{References}

\usepackage{mathtools}
\usepackage{amsmath,amssymb,amsfonts,amsthm}
\usepackage{booktabs}
\usepackage{graphicx}
\graphicspath{{./}{figures/}{figures_split_50_25_25/}}
\usepackage{enumitem}
\usepackage{xcolor}
\usepackage[most]{tcolorbox}
\usepackage{float}
\usepackage{algorithm}
\usepackage{algorithmic}
\usepackage{etoolbox}
\usepackage{subcaption}
\usepackage{adjustbox}
\usepackage{microtype}
\usepackage{hyperref}
\hypersetup{
  colorlinks=false,
  breaklinks=true,
  pdfborder={0 0 0.8},
  linkbordercolor={0 0 1},
  citebordercolor={1 1 1},
  urlbordercolor={1 1 1}
}

\AtBeginEnvironment{algorithmic}{\footnotesize}
\setlist[itemize]{leftmargin=*,topsep=2pt,itemsep=1pt,parsep=0pt}

\newcommand{\Ind}{\mathbb{I}}

\newcommand{\E}{\mathbb{E}}
\newcommand{\Prob}{\mathbb{P}}

\newtheorem{lemma}{Lemma}
\newtheorem{theorem}{Theorem}[section]
\newtheorem{proposition}[theorem]{Proposition}
\newtheorem{corollary}[theorem]{Corollary}

\newtheorem{assumption}[theorem]{Assumption}
\newtheorem{remark}[theorem]{Remark}

\begin{document}

\title{Detecting Unseen Jailbreak Sources: A Multi-Source Conformal Detection Perspective}
\author{
Xunlei Qian$^{1}$,
Yingqian Cui$^{2}$,
Yuping Lin$^{2}$,
Jun Wu$^{2}$,
Yue Xing$^{1}$\\[0.5em]
$^{1}$Department of Statistics and Probability, Michigan State University\\
$^{2}$Department of Computer Science and Engineering, Michigan State University
}

\date{}
\maketitle

\begin{abstract}
Jailbreak defenses for large language models are usually calibrated on a fixed set of known attack methods. However, new attack methods keep appearing after deployment, and existing detectors are observed to degrade under the new attacks \citep{piet2025jailbreaksovertime}. This paper studies how to detect, from a stream of attack prompts, that the prompts come from an attack source outside the known ones. We formulate it as a sequential test whose null hypothesis is that the stream comes from one of the known sources. Methodology-wise, we compute a conformal $p$-value against each known source, and reject the null only when the evidence against every known source is large. Using these source-wise $p$-values, we build Persist-3 and analyze its stationary false-alarm probability and  detection power. The technical contribution of Persist-3 lies in a detection guarantee that uses only a finite window of recent observations. Our theoretical analysis reveals that its detection probability approaches one exponentially fast as the window grows, and thus a short window suffices to detect unseen sources that are well separated from all known sources. Empirically, we validate Persist-3 on top of two recent jailbreak defenses, MTK \citep{zhang2026defending} and Prompt Guard 2 \citep{chennabasappa2025llamafirewall}, and observe a significant defense enhancement.

\end{abstract}

\section{INTRODUCTION}

Large language models (LLMs) have been deployed in many real applications, such as code generation \citep{chen2021evaluating}, interactive assistants \citep{ouyang2022training}, and agent systems that call external tools \citep{yao2022react}. Meanwhile, as these models take on more tasks, their safety becomes a practical concern.

In particular, LLMs are known to be vulnerable to jailbreak attacks \citep{zou2023universal,chao2025jailbreaking}, where a crafted prompt makes LLMs produce a harmful response. To defend against these attacks, most existing methods are developed and tested on standard benchmarks that contain a fixed collection of attack methods \citep{mazeika2024harmbench,chao2024jailbreakbench}. In practice, however, attackers keep designing new methods to bypass deployed defenses. For example, \citet{wang2025sokevaluatingjailbreakguardrails} evaluate existing jailbreak guardrails and find that their performance varies across attack types, and that generalizing to new attack strategies remains an open deployment challenge. Therefore, a defended system needs to know whether the incoming prompts are generated by one of the known attack methods.

To formulate the problem, we call an attack method applied to a given victim model an \emph{attack source}. Let $P_1,\dots,P_K$ denote the distributions of the victim-model hidden representations under the $K$ known sources, and let $Q_{t}$ denote the distribution generating the prompt at time $t$.
At each time $t$, we test $H_0: Q_{t} \in \{P_1,\dots,P_K\}$ against $H_1: Q_{t} \notin \{P_1,\dots,P_K\}$, i.e., instead of determining the specific source of the attack, we want to test whether the incoming attack is totally new compared to the known sources.

However, existing tools do not directly address this test. First, jailbreak detectors and guardrails judge each individual prompt separately. Thus, they do not track whether the stream of attack prompts, as a whole, has moved away from the known attack methods \citep{mazeika2024harmbench,chao2024jailbreakbench}. 

Second, change detectors \citep{shin2022detectors,saha2026non} usually target any departure from a common pre-change distribution. Thus, a switch between two known sources is also a change to be detected. Third, multi-distribution conformal prediction (MDCP) \citep{yang2026multidistributionrobustconformalprediction} calibrates on several sources at once.
However, it aims to build prediction sets that remain valid for every known source and their mixtures, and it does not test whether the current prompts come from a new source outside the known ones. These gaps lead to our question: 
\begin{center}
    \emph{From a stream of victim-model representations, how can we tell that no known source explains the current attack prompts, and what determines how quickly this can be detected?}
\end{center}
Technically, a central challenge of the considered test lies in the union structure of $H_0$. To see this, consider two known sources $P_1$ and $P_2$, and a stream from $P_1$. The $p$-value against $P_1$ is typically large, while the $p$-value against $P_2$ is small, since the stream is far from $P_2$. Standard $p$-value combinations, such as Bonferroni, reject as soon as one known source is rejected. Thus, they raise an alarm on this stream, though no unseen source appears. Different from Bonferroni, we reject $H_0$ only when every known source is rejected, i.e., when even the best-matching known source fails to explain the stream. 
We then accumulate this evidence over a recent window or over the whole history.

Our contributions are as follows:
\begin{itemize}
\item \textbf{Formulation.} We treat unseen jailbreak source detection as a sequential test whose null hypothesis is that the stream comes from one of the known sources. Compared to existing composite sequential tests \citep{saha2026non}, a switch between two known sources is not a detection target under our null (Section~\ref{sec:formulation}).
\item \textbf{Theory.} For our window-based detector, Persist-3, we provide theoretical justifications for its false alarms, memory, and detection power. First, we bound its false alarm probability under a known source (Theorem~\ref{thm:persist-null}). Second, we show that observations before a switch stop affecting its decision after a fixed number of steps (Proposition~\ref{prop:kk-recovery}). Third, we give a lower bound on its detection power controlled by the smallest evidence margin against the known sources (Theorem~\ref{thm:persist-power} and Corollary~\ref{cor:nearest-source}).
\item \textbf{Experiments.} Beyond the theory, we compare Persist-3, which uses only the most recent $W$ observations, with SR-DKW, which accumulates evidence over the whole history of the stream, and obtain two findings. First, the two detectors trade off against each other. SR-DKW detects more unseen sources (0.648 versus 0.458), while the decision of Persist-3 depends only on the most recent $W+L-1$ observations, so the old source stops affecting it shortly after a switch
(Section~\ref{sec:matched-regimes}). Second, adding either detector on top of MTK \citep{zhang2026defending} or Prompt Guard 2 \citep{chennabasappa2025llamafirewall} raises 7.0 to 10.0 percentage points more of the originally successful jailbreak attacks, i.e., escalates them before a harmful response is returned (Section~\ref{sec:defense}).
\end{itemize}

\section{Related Work}
\label{sec:related}

Our work connects conformal inference and jailbreak defense. On the conformal side, split conformal prediction gives distribution-free validity under exchangeability \citep{vovk2005algorithmic,angelopoulos2023conformal}. Multi-source methods retain this validity when calibration data come from several distributions \citep{liu2024multi,wong2025weightedpvalues,yang2026multidistributionrobustconformalprediction}. In addition, shift-adaptive and online conformal methods update thresholds when the test distribution changes \citep{TibshiraniBCR19,gibbs2021adaptive,zaffran2022adaptive}. These methods aim at valid prediction sets under a given shift model. Different from them, we use source-wise conformal $p$-values to test whether the stream is still generated by any calibrated source. On the jailbreak side, attack strategies evolve over time, and detectors degrade under such shifts \citep{piet2025jailbreaksovertime,wang2025sokevaluatingjailbreakguardrails}. Meanwhile, recent studies show that hidden representations separate harmful and benign prompts \citep{lin2024understandingjailbreakattacksllms,gao2025shaping}, which motivates our use of victim-model representations as the detection signal. Appendix~\ref{app:related} gives a detailed discussion.

\section{Method}
\label{sec:method}

This section introduces the proposed detector. We first formulate the problem (Section~\ref{sec:formulation}), then measure how atypical each incoming prompt is under each known source through a conformal $p$-value (Section~\ref{sec:source-p}). Finally we combine this evidence across sources and over time into a detection rule (Sections~\ref{sec:detectnew} to \ref{sec:calibration}), and describe how an alarm is used on top of an existing guardrail (Section~\ref{sec:post-detection}). Figure~\ref{fig:pipeline} in Appendix~\ref{app:algorithm} gives an overview of the pipeline.

\subsection{Problem formulation}
\label{sec:formulation}

\textbf{Data.} For each incoming prompt, we extract the hidden representation $X\in\mathcal X$ of the victim model before any response is returned. Let $Y\in\{0,1\}$ indicate whether the jailbreak succeeds. Let $P_1,\ldots,P_K$ denote the distributions of $X$ under the $K$ known attack sources, i.e., the sources whose data are available for calibration.

\textbf{Stream model and hypotheses.} Following the sequential change-detection literature \citep{shin2022detectors}, we model the stream as piecewise i.i.d. That is, the $X_t$ are independent with $X_t\sim Q_t$, and $Q_t$ stays constant between change points $\nu_1<\nu_2<\cdots$, at which the attacker switches source. We call the stretch between two change points a \emph{segment}. At each time $t$, we test $H_0:Q_t\in\{P_1,\ldots,P_K\}$ against $H_1:Q_t\notin\{P_1,\ldots,P_K\}$. Since the change points are unknown to the detector, it cannot pool all observations of the current segment. Thus, it uses only recent observations (Section~\ref{sec:detectnew}). We call a switch between two known sources a known-to-known (KK) switch, and a switch into an unseen source a known-to-unknown (KU) switch. A KK switch changes the distribution. However, since it stays within $H_0$, it is not a detection target. Only a KU switch is a detection target.

\textbf{Decisions after a switch.} Right after a switch, the recent window still contains prompts from the old source. Thus, our decisions are made after enough prompts from the new source have arrived. For Persist-3, this takes $B=W+L-1$ new observations, where the window length $W$ and the persistence length $L$ are defined in Sections~\ref{sec:detectnew} and \ref{sec:persistence}. After these $B$ observations, the old source no longer affects the decision (\emph{finite forgetting}, Section~\ref{sec:persistence}).

\textbf{Online information.} The online detector uses only $X_t$. Since the outcome $Y_t$ is unavailable when a prompt arrives, the detector does not access it. Outcome labels are used only offline. Specifically, we use them to evaluate interventions in Section~\ref{sec:defense}. In addition, under the additional assumptions of Corollary~\ref{cor:post_enrollment_mdcp}, we use them for optional enrollment of a detected source (Section~\ref{sec:post-detection}).

\subsection{Source-wise conformal $p$-values}
\label{sec:source-p}

This subsection converts each prompt into one $p$-value per known source. To decide whether a prompt is typical under source $k$, we first need a score measuring how far it is from that source. We use a Mahalanobis distance in a low-dimensional space. Specifically, we fit a shared PCA map $\Pi$ on the pooled training representations of all known sources, with output dimension at most $32$. This dimension is fixed in advance, and Appendix~\ref{app:pca-dim} shows that detection levels off at this dimension, while the KK alarm rate is insensitive to it. Then, for each source $k$, we fit a mean $\widehat\mu_k$ and a Ledoit-Wolf covariance $\widehat\Sigma_k$ in the projected space. Writing $z=\Pi(x)$, the nonconformity score is
\begin{equation}
S_k(x)
=
(z-\widehat\mu_k)^\top
\widehat\Sigma_k^{+}
(z-\widehat\mu_k),
\label{eq:mahal}
\end{equation}
where $+$ denotes the Moore-Penrose inverse when needed. A larger $S_k(x)$ means that $x$ is farther from source $k$.

We then convert this score into a $p$-value using held-out calibration data. Let $R_{k,i}=S_k(X_{k,i}^{\mathrm{cal}})$, $i=1,\ldots,n_k$, be the calibration scores of source $k$. For a new representation $x$, define
\begin{equation}
p_k(x)
=
\frac{
1+\sum_{i=1}^{n_k}
\Ind\{R_{k,i}\ge S_k(x)\}
}{
n_k+1
}.
\label{eq:source-p}
\end{equation}
Intuitively, $p_k(x)$ is essentially the fraction of calibration prompts from source $k$ that are at least as far from source $k$ as $x$. Thus, a small $p_k(x)$ is evidence against source $k$. To add this evidence over time, we transform it into $Z_k(x)=-\log p_k(x)$, which is large when $x$ is atypical under source $k$.

\subsection{Window-based evidence across known sources}
\label{sec:detectnew}

This subsection combines the source-wise evidence over a recent window and then across the known sources. Since a single small $p$-value can occur by chance even when the prompt comes from source $k$, we average the evidence over the most recent $W$ observations. In our jailbreak evaluation, we use $W=10$. With $m_t=\min\{W,t\}$, define, for each known source $k$,
\begin{equation}
C_{k,t}
=
\frac{1}{m_t}
\sum_{r=\max\{1,t-W+1\}}^{t}
-\log p_k(X_r).
\label{eq:source-window}
\end{equation}
Since $H_0$ holds as long as at least one known source explains the stream, we then take the minimum over the $K$ known sources:
\begin{equation}
C_t^{\mathrm{fam}}
=
\min_{k\in[K]}C_{k,t}.
\label{eq:family-stat}
\end{equation}
In other words, $C_t^{\mathrm{fam}}$ (``fam'' for the family of known sources) is the evidence against the known source that best explains the recent observations. Thus, it is small whenever at least one known source explains them well. Conversely, $C_t^{\mathrm{fam}}\ge\tau$ holds if and only if $C_{k,t}\ge\tau$ for every $k$. Therefore, exceeding the threshold $\tau$ means that every known source is rejected.

\textbf{Order of aggregation.} The order of the two aggregations in \eqref{eq:source-window} and \eqref{eq:family-stat}
matters. Since the minimum over sources is taken after the average over time, the same known source must explain the whole window. Thus, a window that mixes prompts from two known sources can be inconsistent with every single known source, even when each prompt in it is compatible with one of them. This is why a window that contains a switch between two known sources can raise an alarm (Remark~\ref{rem:kk-transient}). Once all observations used by the decision come from the new known source, the false-alarm bound of Theorem~\ref{thm:persist-null} applies again, provided that its threshold condition $\tau>\mu_{0,j}$ holds.

\textbf{A cumulative comparator.} For comparison, we also consider SR-DKW, a detector that accumulates evidence over the whole history of the stream. It applies a Dvoretzky--Kiefer--Wolfowitz (DKW) correction to the same source-wise $p$-values, and then feeds them into a Shiryaev--Roberts recursion (Appendix~\ref{app:auxiliary}). Since SR-DKW combines the sources at each step before accumulating over time, it does not require a single known source to explain a whole window. Our theoretical analysis focuses on Persist-3. Since its window is finite, we can characterize exactly when earlier observations stop affecting the current decision. The two detectors are compared empirically in Section~\ref{sec:experiments}.

\subsection{Persistent confirmation}
\label{sec:persistence}

This subsection adds a persistence rule that removes isolated threshold crossings. Since a single crossing of the threshold can still occur by chance, the detector raises an alarm only when $C_t^{\mathrm{fam}}$ stays above $\tau$ for $L$ consecutive steps. Let
\[
A_t(\tau)=\Ind\{C_t^{\mathrm{fam}}\ge\tau\}.
\]
For a persistence length $L$ and $t\ge L$, define
\begin{equation}
D_t
=
\Ind\!\left\{
A_t(\tau)=A_{t-1}(\tau)=\cdots=A_{t-L+1}(\tau)=1
\right\},
\label{eq:persist-decision}
\end{equation}
and set $D_t=0$ for $t<L$. Equivalently, $D_t=\Ind\{\min_{s=t-L+1,\ldots,t}C_s^{\mathrm{fam}}\ge\tau\}$. The detection time is $T_{\rm alarm}=\inf\{t\ge L:D_t=1\}$, with $\inf\varnothing=\infty$. In the analysis, we keep computing $D_t$ after the first alarm. This allows us to describe its behavior at any fixed time. In practice, Algorithm~\ref{alg:detectnew} stops at $T_{\rm alarm}$ and escalates the stream, and later values of $D_t$ do not undo this escalation.

\textbf{Choice of $L$.} For our jailbreak evaluation, we use $W=10$ and $L=3$. That is, the detector averages the evidence against each known source over the last $10$ observations, takes the minimum over the $K$ known sources, and raises an alarm only when this minimum stays above $\tau$ for three consecutive steps.
This is where the name Persist-3 comes from. We fix $L=3$ in advance by a pilot on known sources only. Specifically, $L=3$ is the smallest $L$ such that runs of at least $L$ consecutive threshold crossings are rare on known-source streams. Appendix~\ref{app:persist-length} reports $L\in\{1,2,3,4\}$ and shows a trade-off: a larger $L$ gives a higher detection rate but a longer delay. We require persistence for the following reason. Streams from a known source tend to produce isolated threshold crossings, whereas streams from an unseen source tend to produce consecutive ones. Thus, persistence removes most chance crossings while keeping sustained evidence. Appendix~\ref{app:algorithm} relates this rule to the $M$-out-of-$N$ rule in radar detection \citep{schwartz1956coincidence} and to runs rules for Shewhart charts \citep{champ1987exact}.

\textbf{Finite forgetting.} The decision $D_t$ has a finite memory. Since $D_t$ uses $L$ consecutive windows of length $W$, it depends only on the most recent $B=W+L-1$ observations. Thus, after a switch, observations from the old source stop affecting the decision once $B$ new observations have arrived. We call this property \emph{finite forgetting} and state it formally in Proposition~\ref{prop:kk-recovery}. For $W=10$ and $L=3$, this takes $B=12$ observations. Cumulative detectors such as SR-DKW do not have this property, since their internal state may keep reflecting pre-switch evidence.

\subsection{Threshold calibration using known sources only}
\label{sec:calibration}

This subsection selects the threshold $\tau$ using validation streams from known sources only. Specifically, no data or labels from the held-out unseen source are used. Let $\widehat{\mathrm{FPR}}_{\mathrm{val}}(\tau)$ denote the fraction of validation streams from a single known source that raise an alarm. Let $\widehat{\mathrm{KK}}_{\mathrm{val}}(\tau)$ denote the fraction of validation streams with a KK switch that raise an alarm. We choose
\[
\widehat{\tau}
=
\min\left\{
\tau\in\mathcal{T}:
\widehat{\mathrm{FPR}}_{\mathrm{val}}(\tau)\le \alpha_{\mathrm{stat}},
\;
\widehat{\mathrm{KK}}_{\mathrm{val}}(\tau)\le \alpha_{\mathrm{KK}}
\right\},
\]
where $\mathcal{T}$ is a prespecified threshold grid and $(\alpha_{\mathrm{stat}},\alpha_{\mathrm{KK}})$ are fixed before KU evaluation. That is, $\widehat\tau$ is the smallest threshold whose alarm rates on known-source validation streams stay within the two caps. The caps, the grid, and the selected thresholds of each protocol are listed in Appendix~\ref{app:algorithm}. When we repeat the evaluation on a second data split (Appendix~\ref{app:maintext-deferred}), the selection rule is kept fixed, and only $\widehat\tau$ is recalibrated. The detector architecture, $W$, $L$, the source score, and the aggregation rule are not retuned. Finally, each observation requires one score and one $p$-value per known source. In addition, Persist-3 stores only the most recent $W$ values per source.

\subsection{Layered intervention}
\label{sec:post-detection}

This subsection describes how an alarm is used to intervene on top of an existing guardrail. Let $G_t\in\{0,1\}$ indicate whether an unchanged prompt-level guardrail intervenes on prompt $t$. Let $M$ denote a source monitor, i.e., Persist-3 or SR-DKW, and let $T_{\rm act}^{(M)}$ be the time from which $M$ escalates every incoming prompt, e.g., by sending it to a stricter review. Write $A_t^{(M)}=\Ind\{t\ge T_{\rm act}^{(M)}\}$. The combined action is $I_t^{(G+M)}=G_t\vee A_t^{(M)}$. In other words, the guardrail keeps screening each prompt, and the monitor only decides the time from which every prompt in the stream is escalated. Escalation activates at the alarm itself, $T_{\rm act}^{(M)}=T_{\rm alarm}$ of Section~\ref{sec:persistence}, and the alarmed prompt is escalated (Appendix~\ref{app:defense}). An escalation is not undone when old observations leave the window. The reason is that finite forgetting (Proposition~\ref{prop:kk-recovery}) concerns the statistic $D_t$, whereas an escalation stays on once it is activated.

\section{Theoretical Results}
\label{sec:theory}

This section provides theoretical guarantees for Persist-3. Specifically, we study the validity of the source-wise $p$-values (Section~\ref{sec:theory-pvalue}), false alarm control under a known source (Section~\ref{sec:theory-null}), finite forgetting after a KK switch (Section~\ref{sec:theory-forgetting}), and detection power (Section~\ref{sec:theory-power}). 
Proofs are in Appendix~\ref{app:proofs}. Further results, including why persistence helps after calibration and a score-space identifiability limit, are in Appendix~\ref{app:additional-theory}.

\begin{assumption}[Frozen offline fit and i.i.d.\ segments]
\label{ass:monitoring}
Let $\mathcal T$ contain all training data and fitting randomness, so that the PCA map and all scores $S_k$ are fixed given $\mathcal T$. For each $k$, conditional on $\mathcal T$, the $n_k\ge1$ calibration observations are i.i.d.\ from $P_k$, with $n_k$ fixed before they are inspected. Independence across source calibration sets is not required.

Let $\mathcal O$ contain $\mathcal T$, all calibration and validation data, and the selected $W,L\ge1$ and $\tau$, all frozen before the final stream and independent of it. For deterministic change points $1=\nu_0<\nu_1<\cdots$, whose last endpoint may be $\infty$, and fixed distributions $Q_r$, each segment $\{X_t:\nu_r\le t<\nu_{r+1}\}$ is i.i.d.\ from $Q_r$ conditional on $\mathcal O$, and a known segment has $Q_r=P_j$ for some $j$.
\end{assumption}

Assumption~\ref{ass:monitoring} has two parts. The first part says that, once the offline fit is frozen, the calibration observations of each source behave like fresh draws from that source. Thus, it gives the exchangeability used by Proposition~\ref{prop:source_validity}. The second part says that the stream is a sequence of i.i.d.\ segments, and that all offline objects are chosen without looking at it. Thus, $Z_k(X_t)$ is independent and bounded within a segment, which is required by the weighted Hoeffding arguments in Theorems~\ref{thm:persist-null} and \ref{thm:persist-power}. Both parts follow standard practice, e.g., piecewise i.i.d.\ segments in sequential change detection \citep{shin2022detectors}, and split-conformal validity under a frozen score \citep{vovk2005algorithmic,angelopoulos2023conformal}.  Appendix~\ref{app:dwell-time} studies dwell times shorter than $B=W+L-1$.

\subsection{Source-wise conformal validity}
\label{sec:theory-pvalue}

\begin{proposition}[Source-wise split-conformal validity]
\label{prop:source_validity}
Under Assumption~\ref{ass:monitoring}, for a fresh $X\sim P_k$ independent of calibration given $\mathcal T$,
\[
\Prob\{p_k(X)\le u\mid\mathcal T\}\le u,\; u\in[0,1].
\]
Here, the probability is taken over both the calibration data and the new prompt $X$. The result also holds when the calibration scores and the score of $X$ are exchangeable, which is weaker than i.i.d. 
If two scores are equal with probability zero, i.e., there are no ties, then $p_k(X)$ is exactly uniform on $\{1/(n_k+1),\ldots,1\}$ conditional on $\mathcal T$.
\end{proposition}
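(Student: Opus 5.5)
Condition on $\mathcal T$ throughout, so that the score $S_k$ is a fixed measurable function. Write $R_{k,n_k+1}=S_k(X)$ for the test score and $R_{k,1},\ldots,R_{k,n_k}$ for the calibration scores. By Assumption~\ref{ass:monitoring}, the calibration points are i.i.d.\ from $P_k$ given $\mathcal T$, and $X\sim P_k$ is independent of them. Applying the same fixed map $S_k$ to every point therefore gives i.i.d., hence exchangeable, scores $R_{k,1},\ldots,R_{k,n_k+1}$. Only this exchangeability is used below, which gives the weaker-hypothesis version of the statement for free.

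The main step is the standard rank argument. Let $N=n_k+1$, and define the rank $\rho=\#\{i\le N: R_{k,i}\ge R_{k,N}\}$, counting the test point itself. Then $p_k(X)=\rho/N$ by \eqref{eq:source-p}. I will show $\Prob\{\rho\le m\mid\mathcal T\}\le m/N$ for each integer $m\in\{0,\ldots,N\}$.

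\emph{Step 1 (symmetric bound).} Define $\rho_j=\#\{i:R_{k,i}\ge R_{k,j}\}$ for each $j$. By exchangeability, $(R_{k,1},\ldots,R_{k,N})$ has the same law under the swap of coordinates $j$ and $N$, so $\Prob\{\rho_N\le m\}=\Prob\{\rho_j\le m\}$ for every $j$. Averaging over $j$ gives
\[
\Prob\{\rho\le m\}=\frac1N\,\E\Bigl[\#\{j:\rho_j\le m\}\Bigr].
\]

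\emph{Step 2 (deterministic count).} For any fixed vector of reals, at most $m$ indices satisfy $\rho_j\le m$. To see this, suppose $j_1,\ldots,j_{m+1}$ all satisfy it. Take the one with the smallest value among them. At least $m+1$ entries are $\ge$ its value, so its $\rho\ge m+1$, a contradiction. Ties only increase $\rho_j$, so they help. Hence $\Prob\{\rho\le m\}\le m/N$.

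\emph{Step 3 (real $u$).} For real $u\in[0,1]$, set $m=\lfloor uN\rfloor$. Then $\Prob\{p_k(X)\le u\mid\mathcal T\}=\Prob\{\rho\le m\}\le m/N\le u$. The probability here is joint over the calibration data and $X$, as stated.

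\emph{Exactness.} If ties have probability zero, the $\rho_j$ are almost surely a permutation of $\{1,\ldots,N\}$. Then exactly one index has $\rho_j=m$, and the average in Step 1 gives $\Prob\{\rho=m\}=1/N$. So $p_k(X)$ is uniform on $\{1/N,\ldots,1\}$.

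The only delicate point is making sure the exchangeability is genuinely conditional on $\mathcal T$. This requires the score to be fit on data disjoint from the calibration set, which is the frozen-fit part of Assumption~\ref{ass:monitoring}. It also requires $n_k$ to be fixed before inspection, so that the index set is not data-dependent. I will state these explicitly. Everything else is routine.
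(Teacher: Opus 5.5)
Your proof is correct and is the same exchangeability-based rank argument the paper uses; the only difference is in how ties are handled. The paper breaks ties uniformly at random so that the rank $R$ of the fresh score is exactly uniform on $\{1,\ldots,n_k+1\}$, then uses $p_k(X)\ge R/(n_k+1)$. Your symmetrization over indices, combined with the deterministic fact that at most $m$ indices can have inclusive rank $\rho_j\le m$, gives the same bound with no auxiliary randomization and reduces to the exact discrete-uniform law when ties have probability zero.
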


The proof of Proposition~\ref{prop:source_validity} is in Appendix~\ref{app:proofs}. In short, it uses the standard rank argument of split conformal prediction. Note that Proposition~\ref{prop:source_validity} concerns a single observation. Since the same calibration sample is reused over time, it does not imply that the $p$-values along the stream are conditionally i.i.d.\ uniform. In addition, it does not give an anytime guarantee. A DKW argument further bounds the tail of $p_k$ conditional on the calibration data (Lemma~\ref{lem:cal-tail}, Appendix~\ref{app:additional-theory}). We use this bound only in the analysis, while the detector uses the raw $p$-values.

\subsection{False alarm control under a known source}
\label{sec:theory-null}

This subsection bounds the false alarm probability of Persist-3 under a known source. To explain the result, we first introduce some notation. Throughout this subsection, we condition on $\mathcal O$. Write $b_k=\log(n_k+1)$, so $0\le Z_k(X)\le b_k$. Let $\mu_{0,j}=\E_{P_j}[Z_j(X)\mid\mathcal O]$ be the mean evidence against a known source $P_j$ under $P_j$ itself. A Persist-$L$ decision uses $L$ overlapping windows of length $W$, which cover $B=W+L-1$ observations. Let $a_h$ be the number of windows containing the $h$th of these observations, and let $A_{W,L}=\sum_{h=1}^{B}a_h^2$. When $W\ge L$,
\begin{equation}
A_{W,L}
=
2\sum_{r=1}^{L-1}r^2
+(W-L+1)L^2.
\label{eq:awl}
\end{equation}

\begin{theorem}[Persistence-aware stationary false-alarm bound]
\label{thm:persist-null}
Under Assumption~\ref{ass:monitoring}, let $t\ge B$ be a deterministic time such that $X_{t-B+1},\ldots,X_t$ lie in one segment from $P_j$. All $L$ windows then have length $W$. If $\tau>\mu_{0,j}$, then
\begin{equation}
\Prob_{P_j}\{D_t=1\mid\mathcal O\}
\le
\exp\left\{
-\frac{2(LW)^2(\tau-\mu_{0,j})^2}
{b_j^2A_{W,L}}
\right\}.
\label{eq:persist-null-bound}
\end{equation}
\end{theorem}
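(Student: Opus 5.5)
The plan is to replace the event $\{D_t=1\}$ by a single linear statistic of the $B$ observations and then apply Hoeffding's inequality to it. By \eqref{eq:persist-decision}, $D_t=1$ requires $C_s^{\mathrm{fam}}\ge\tau$ for every $s=t-L+1,\ldots,t$. Since $C_s^{\mathrm{fam}}=\min_k C_{k,s}\le C_{j,s}$, this event is contained in $\{C_{j,s}\ge\tau \text{ for all } s\}$. That event is in turn contained in the event that the average of the $L$ window statistics is at least $\tau$:
\[
\{D_t=1\}\subseteq\Big\{\tfrac1L\sum_{s=t-L+1}^{t}C_{j,s}\ge\tau\Big\}.
\]
Because $t\ge B$, every window has length $m_s=W$. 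Rewriting the double sum by observation gives
\[
\tfrac1L\sum_s C_{j,s}=\tfrac{1}{LW}\sum_{h=1}^{B}a_h\,Z_j(X_{t-B+h}),
\]
where $a_h$ counts the windows that contain the $h$th observation.

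The next step is to condition on $\mathcal O$. This fixes the scores and the calibration sample, hence the map $x\mapsto p_j(x)$ and the function $Z_j$. By Assumption~\ref{ass:monitoring}, the $B$ observations lie in one $P_j$ segment and are i.i.d.\ given $\mathcal O$. So the summands $a_hZ_j(X_{t-B+h})$ are conditionally independent, take values in $[0,a_hb_j]$, and have mean $a_h\mu_{0,j}$, where $\mu_{0,j}$ is the conditional mean given $\mathcal O$ as defined in the statement. The mean of the weighted sum is $\sum_h a_h\mu_{0,j}=LW\mu_{0,j}$, since $\sum_h a_h=LW$. We need $\tau>\mu_{0,j}$ so that the deviation is nonnegative.

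Hoeffding's inequality for independent bounded variables then gives
\[
\Prob\Big\{\textstyle\sum_h a_hZ_h-LW\mu_{0,j}\ge LW(\tau-\mu_{0,j})\,\Big|\,\mathcal O\Big\}\le\exp\Big\{-\tfrac{2(LW)^2(\tau-\mu_{0,j})^2}{\sum_h a_h^2b_j^2}\Big\}.
\]
Here $\sum_h a_h^2=A_{W,L}$, which is exactly \eqref{eq:persist-null-bound}. Formula \eqref{eq:awl} for $W\ge L$ comes from counting window memberships. The first and last $L-1$ observations are covered by $1,\ldots,L-1$ windows, and the middle $W-L+1$ observations are each covered by all $L$ windows. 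The bound itself holds with $A_{W,L}$ defined as $\sum_h a_h^2$, so the case $W<L$ needs no separate treatment.

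The main obstacle is being careful about which randomness is conditioned on. Proposition~\ref{prop:source_validity} averages over the calibration sample, and the $p$-values along the stream are not i.i.d.\ unconditionally. The argument therefore works entirely conditionally on $\mathcal O$, where independence is genuine, and uses the conditional mean $\mu_{0,j}$ rather than any uniformity of the $p$-values. The proposition is not needed at all; it only explains why $\mu_{0,j}\approx1$ typically. One also has to check that $0\le Z_j\le b_j$, which holds because $p_j\in[1/(n_j+1),1]$. The reduction from the minimum to the average loses information, but only in the favorable direction for an upper bound.
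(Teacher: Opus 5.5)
Your proposal is correct and follows the paper's proof essentially step for step. Both reduce the alarm event to $C_{j,s}\ge\tau$ for all $L$ windows, rewrite the sum of these window statistics as the weighted sum $\sum_h a_h Z_j(X_{t-B+h})$ with $\sum_h a_h=LW$, and then apply the weighted Hoeffding inequality conditionally on $\mathcal O$ with summand ranges $a_h b_j$.
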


The proof of Theorem~\ref{thm:persist-null} is in Appendix~\ref{app:proofs}. In short, a Persist-$L$ alarm forces all $L$ window averages against the true source $P_j$ above $\tau$. Their sum is a weighted sum of $B$ independent bounded terms with weights $a_h$. Then, the weighted Hoeffding inequality gives \eqref{eq:persist-null-bound}.

Theorem~\ref{thm:persist-null} illustrates two observations. First, the bound decays exponentially in the gap $\tau-\mu_{0,j}$. Thus, a threshold above the typical evidence under a known source gives rare false alarms. Second, the bound concerns the persistence event itself, not only the larger event $\{C_t^{\rm fam}\ge\tau\}$. Thus, it accounts for the dependence among the overlapping windows. For example, for $W=10$ and $L=3$, $A_{10,3}=82$, giving an effective sample size $(LW)^2/A_{W,L}\approx10.98$ instead of $LW=30$. In addition, Corollary~\ref{cor:finite-horizon} extends the bound to a finite horizon, including the first steps when the windows are not yet full, and Corollary~\ref{cor:dkw-null} gives a DKW refinement. Finally, since $\tau$ is selected from validation data, these bounds do not by themselves guarantee a prespecified population-level alarm rate for the selected threshold.

\subsection{Known-to-known switching and finite forgetting}
\label{sec:theory-forgetting}

\begin{proposition}[Finite forgetting within a known-source segment]
\label{prop:kk-recovery}
Suppose a KK switch occurs immediately before $X_\nu$, and the new segment follows $P_j$ until the next change point $\nu_{\rm next}$. Let $B=W+L-1$. At every deterministic time $\nu+B-1\le t<\nu_{\rm next}$, the continued decision $D_t$ depends only on observations from $P_j$, so Theorem~\ref{thm:persist-null} applies if $\tau>\mu_{0,j}$. 
\end{proposition}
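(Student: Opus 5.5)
The plan is to unwind the definition of $D_t$ and trace which observations it reads. By \eqref{eq:persist-decision}, $D_t$ is a deterministic function of $C_s^{\mathrm{fam}}$ for $s=t-L+1,\ldots,t$. By \eqref{eq:family-stat}, each $C_s^{\mathrm{fam}}$ is in turn a deterministic function, given $\mathcal O$, of the window averages $C_{k,s}$ for $k\in[K]$. The threshold $\tau$, the PCA map, the scores $S_k$ and the calibration scores $R_{k,i}$ all sit inside $\mathcal O$, so they are fixed. By \eqref{eq:source-window}, each $C_{k,s}$ depends on the stream only through $X_r$ with $\max\{1,s-W+1\}\le r\le s$. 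Taking the union over $s\in\{t-L+1,\ldots,t\}$ shows that $D_t$ is a fixed measurable function of $(X_{t-B+1},\ldots,X_t)$, where $B=W+L-1$.

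Next, check the index ranges. The condition $t\ge\nu+B-1$ gives $t-B+1\ge\nu$, and $t<\nu_{\rm next}$ holds by assumption. Hence every index in $\{t-B+1,\ldots,t\}$ lies in the segment $[\nu,\nu_{\rm next})$, which by Assumption~\ref{ass:monitoring} is i.i.d.\ from $P_j$ conditional on $\mathcal O$.

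Two small points need care about the windows being full.
\begin{itemize}
\item Since $\nu\ge 2$ after a switch, or $\nu\ge1$ in general, we have $t\ge B$. Then for every $s\ge t-L+1\ge W$, we get $m_s=W$ and $\max\{1,s-W+1\}=s-W+1$, so no truncated windows appear.
\item We also have $t\ge L$, so $D_t$ is given by the persistence formula and not by the convention $D_t=0$.
\end{itemize}
Together these show that $D_t$ has exactly the form analyzed in Theorem~\ref{thm:persist-null}, with all $L$ windows of length $W$.

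Finally, the hypotheses of Theorem~\ref{thm:persist-null} hold verbatim. The time $t$ is deterministic, $t\ge B$, and $X_{t-B+1},\ldots,X_t$ lie in one segment from $P_j$. Provided $\tau>\mu_{0,j}$, the bound \eqref{eq:persist-null-bound} follows. The pre-switch observations, and any earlier segments including the old source, enter neither the function nor its arguments. The theorem conditions only on $\mathcal O$, and the pre-switch $X$'s are independent of the current segment given $\mathcal O$ under the piecewise-independent stream model. So conditioning on, or marginalizing over, the past does not alter the law of $D_t$.

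The main obstacle is bookkeeping rather than analysis. One must confirm that the reused calibration sample does not break the argument. It does not, because the calibration sample is part of $\mathcal O$ and Theorem~\ref{thm:persist-null} is itself stated conditionally on $\mathcal O$. One must also confirm that no hidden state, such as the stopping at $T_{\rm alarm}$, feeds back into the continued $D_t$. This is handled by noting that we analyze the continued statistic as defined in Section~\ref{sec:persistence}, which has no recursive state.
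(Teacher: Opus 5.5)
Your proposal is correct and follows essentially the same route as the paper. Both arguments show that, given the frozen offline objects in $\mathcal O$, $D_t$ is a function of the block $X_{t-B+1},\ldots,X_t$; note that $t\ge\nu+B-1$ places this block inside the $P_j$ segment; and then invoke Theorem~\ref{thm:persist-null}. Your extra checks make explicit what the paper leaves implicit or states only briefly: full windows because $t-L+1\ge W$, $t\ge L$, and no recursive state feeding into the continued statistic.
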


The proof of Proposition~\ref{prop:kk-recovery} is in Appendix~\ref{app:proofs}. In other words, the old source stops affecting the decision after $B$ observations from the new source, which we call \emph{finite forgetting}. For $W=10$ and $L=3$, this takes $B=12$ observations. If the attacker switches again before $B$ new observations arrive, no fully post-switch decision exists, and Proposition~\ref{prop:kk-recovery} does not apply. Appendix~\ref{app:dwell-time} studies this case.

\begin{remark}[Transient KK alarms]
\label{rem:kk-transient}
Finite forgetting does not imply zero alarm probability right after a KK switch. 
During the first $B-1$ decisions after a switch from $P_i$ to $P_j$, a window contains prompts from both sources. The prompts from $P_j$ raise the average evidence against $P_i$, and the prompts from $P_i$ raise the average evidence against $P_j$. Thus, neither source may explain the whole window, and an alarm can occur. Finite forgetting also does not undo an earlier alarm, and segments shorter than $B$ may admit no decision based on a single source.
\end{remark}

Remark~\ref{rem:kk-transient} is consistent with the experiments, where the KK alarm rate is small but nonzero.

\subsection{Detection power and nearest-source separation}
\label{sec:theory-power}

This subsection lower bounds the detection power of Persist-3 under an unseen source. We again condition on $\mathcal O$. For a candidate unseen source $Q$, let $\mu_k(Q)=\E_Q[-\log p_k(X)\mid\mathcal O]$ be the mean evidence against known source $k$ under $Q$. Let $\Delta_k(Q)=\mu_k(Q)-\tau$ be its margin above the threshold.

\begin{theorem}[Detection power under separation from every known source]
\label{thm:persist-power}
Under Assumption~\ref{ass:monitoring}, suppose a segment $[\nu,\nu_{\rm next})$ follows $Q$ and $\Delta_k(Q)>0$ for every known source $k$. At any deterministic $t$ satisfying $\nu+B-1\le t<\nu_{\rm next}$, where $B=W+L-1$,
\begin{equation}
\Prob_Q\{D_t=1\mid\mathcal O\}
\ge
1-
L\sum_{k=1}^{K}
\exp\left(
-\frac{2W\Delta_k^2(Q)}{b_k^2}
\right).
\label{eq:persist-power}
\end{equation}
\end{theorem}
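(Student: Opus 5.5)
The plan is to bound the complement event by a union bound over the $L$ windows and the $K$ sources, and then apply Hoeffding's inequality to each individual window average.

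First, since $D_t=\Ind\{\min_{s=t-L+1,\ldots,t}C_s^{\mathrm{fam}}\ge\tau\}$, the failure event $\{D_t=0\}$ is contained in $\bigcup_{s=t-L+1}^{t}\{C_s^{\mathrm{fam}}<\tau\}$. By the definition of the family statistic, $\{C_s^{\mathrm{fam}}<\tau\}=\bigcup_{k=1}^K\{C_{k,s}<\tau\}$. A union bound then gives
\[
\Prob_Q\{D_t=0\mid\mathcal O\}\le\sum_{s=t-L+1}^{t}\sum_{k=1}^K\Prob_Q\{C_{k,s}<\tau\mid\mathcal O\}.
\]
Once each summand is bounded by $\exp(-2W\Delta_k^2(Q)/b_k^2)$, summing over the $L$ values of $s$ yields exactly the factor $L$ in \eqref{eq:persist-power}.

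Second, I check that every relevant window is full and lies inside the $Q$-segment. The condition $t\ge\nu+B-1$ gives $t-L+1\ge\nu+W-1$, so each window $\{s-W+1,\ldots,s\}$ with $s\in[t-L+1,t]$ starts at or after $\nu$. It also has length $W$, since $s\ge W$ because $\nu\ge1$. In addition, each window ends before $\nu_{\rm next}$. By Assumption~\ref{ass:monitoring}, conditional on $\mathcal O$ the observations in each window are i.i.d.\ from $Q$. The scores $S_k$ and the calibration set are frozen in $\mathcal O$, so the quantities $Z_k(X_r)=-\log p_k(X_r)$ for $r$ in the window are i.i.d., take values in $[0,b_k]$ (because $p_k\ge1/(n_k+1)$), and have mean $\mu_k(Q)$.

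Third, I apply Hoeffding's inequality to the average $C_{k,s}$ of these $W$ terms:
\[
\Prob_Q\{C_{k,s}<\tau\mid\mathcal O\}=\Prob_Q\{C_{k,s}-\mu_k(Q)<-\Delta_k(Q)\mid\mathcal O\}\le\exp\bigl(-2W\Delta_k^2(Q)/b_k^2\bigr),
\]
which uses $\Delta_k(Q)>0$.

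The only real subtlety is the conditional-independence step. The same calibration sample enters every $p_k(X_r)$, so the $p$-values are not independent unconditionally. They are independent once we condition on $\mathcal O$, which contains the calibration data, and this is why the whole argument is carried out conditionally on $\mathcal O$ and why $\mu_k(Q)$ is defined conditionally. Unlike Theorem~\ref{thm:persist-null}, no weighted Hoeffding bound over the overlapping windows is needed, because the union bound handles the overlap, at the price of the factor $L$.
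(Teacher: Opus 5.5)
Your proposal is correct and follows the paper's proof essentially step for step: Hoeffding's lower-tail bound on each full post-change window average against source $k$, then a union bound over the $K$ sources and a second union bound over the $L$ persistence times. As in the paper, no independence across the overlapping windows is needed. Your explicit check that all $L$ windows are full and lie inside $[\nu,\nu_{\rm next})$, and your remark that independence holds only conditionally on $\mathcal O$, spell out points the paper's proof leaves implicit.
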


The proof of Theorem~\ref{thm:persist-power} is in Appendix~\ref{app:proofs}. In short, each window average against source $k$ concentrates around $\tau+\Delta_k(Q)$ by Hoeffding's inequality. Then, two union bounds, over the $K$ sources and over the $L$ persistence times, give \eqref{eq:persist-power}. 

Theorem~\ref{thm:persist-power} has two implications. First, it gives a sufficient condition for detection. If the mean evidence under $Q$ exceeds $\tau$ against every known source, the detection probability approaches one as $W$ grows. Second, it shows which source dominates the bound. Since each term of the bound is exponential in $W\Delta_k^2(Q)/b_k^2$, the bound is dominated by the smallest \emph{normalized} margin $\Delta_k(Q)/b_k$. That is, the margin is measured against the range $b_k=\log(n_k+1)$ of the evidence for that source. Note that an unseen source, i.e., $Q\notin\{P_1,\ldots,P_K\}$, does not necessarily satisfy this condition. In addition, the bound is nontrivial only when the subtracted sum is below one. Finally, when the source changes again before $B$ observations arrive, the window mixes several segments, and \eqref{eq:persist-power} need not apply. Appendix~\ref{app:dwell-time} shows that detection then degrades, most strongly when unseen segments are interrupted by known ones, but it does not vanish. The second implication leads to the following corollary.

\begin{corollary}[Nearest-source separation]
\label{cor:nearest-source}
Let $\Delta_*(Q)=\min_{k\in[K]}\Delta_k(Q)$, which we call the \emph{nearest-source separation} of $Q$, and $b_{\max}=\max_{k\in[K]}b_k$. 
Under the segment and time conditions of Theorem~\ref{thm:persist-power}, if $\Delta_*(Q)>0$, then
\begin{equation}
\Prob_Q\{D_t=1\mid\mathcal O\}
\ge
1-LK\exp\left(
-\frac{2W\Delta_*^2(Q)}{b_{\max}^2}
\right).
\label{eq:nearest-source-power}
\end{equation}
\end{corollary}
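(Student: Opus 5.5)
The corollary follows from Theorem~\ref{thm:persist-power} by bounding each summand of \eqref{eq:persist-power} by a common worst-case term.

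First we check that the theorem applies. The segment and time conditions are assumed. The positivity condition also holds, since $\Delta_k(Q)\ge\Delta_*(Q)>0$ for every $k\in[K]$. Theorem~\ref{thm:persist-power} therefore gives
\[
\Prob_Q\{D_t=1\mid\mathcal O\}\ge 1-L\sum_{k=1}^{K}\exp\left(-\frac{2W\Delta_k^2(Q)}{b_k^2}\right).
\]

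Next we bound each exponent. Because $0<\Delta_*(Q)\le\Delta_k(Q)$, squaring preserves the order: $\Delta_k^2(Q)\ge\Delta_*^2(Q)$. Because $n_k\ge1$, we have $0<b_k\le b_{\max}$, so $1/b_k^2\ge 1/b_{\max}^2$. Combining the two gives
\[
\frac{2W\Delta_k^2(Q)}{b_k^2}\ge\frac{2W\Delta_*^2(Q)}{b_{\max}^2}.
\]
Since $x\mapsto e^{-x}$ is decreasing, each summand is at most $\exp\{-2W\Delta_*^2(Q)/b_{\max}^2\}$. Summing the $K$ identical upper bounds and multiplying by $L$ yields \eqref{eq:nearest-source-power}.

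The only points needing care are the following.
\begin{itemize}
\item We need $\Delta_*(Q)>0$ before squaring, so that the inequality $\Delta_k\ge\Delta_*$ carries over to the squares.
\item We need $b_k>0$ for every $k$, which follows from $n_k\ge1$.
\end{itemize}
Both hold under the stated assumptions, so there is no genuine obstacle.
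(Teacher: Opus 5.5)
Your proposal is correct and follows the paper's proof: you bound each summand of Theorem~\ref{thm:persist-power} by $\exp\{-2W\Delta_*^2(Q)/b_{\max}^2\}$ using $\Delta_k(Q)\ge\Delta_*(Q)$ and $b_k\le b_{\max}$, then sum the $K$ terms and multiply by $L$. You also make explicit the two facts the paper leaves implicit, namely that $\Delta_*(Q)>0$ gives the positivity hypothesis of the theorem (and lets the inequality survive squaring), and that $n_k\ge1$ gives $b_k>0$.
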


Corollary~\ref{cor:nearest-source} states the second implication in terms of the raw margins. Specifically, it bounds every $b_k$ by $b_{\max}$. Thus, the power bound is controlled by the smallest source-wise evidence margin, not by the average margin over known sources. Rearranging \eqref{eq:nearest-source-power}, the bound reaches $1-\eta$ once
\[
W\ \ge\ \frac{b_{\max}^2}{2\Delta_*^2(Q)}\log\frac{LK}{\eta},
\]
which is the scale on which the window has to grow as the margin shrinks. 

Appendix~\ref{app:additional-theory} further gives a delay bound (Corollary~\ref{cor:delay}) and an identifiability limit (Proposition~\ref{prop:identifiability}).

\section{Experiments}
\label{sec:experiments}

In this section, we empirically verify the analysis in Sections~\ref{sec:method} and \ref{sec:theory}. Our experiments aim to justify (1) the trade-off between window-based and cumulative detection under the same calibration, including the bounded memory of Persist-3 (Section~\ref{sec:matched-regimes}); (2) the role of nearest-source separation (Corollary~\ref{cor:nearest-source}, Section~\ref{sec:separation}); and (3) the incremental protection and cost of adding the detectors to existing guardrails (Section~\ref{sec:defense}).

\paragraph{Models, attacks, and protocol.}
We use AdvBench/Harmful Behaviors \citep{zou2023universal} and HarmBench \citep{mazeika2024harmbench} with four victim models: Llama2-7B-chat, Qwen2.5-7B-Instruct, Gemma-2-9B-it, and Llama-3.1-8B-Instruct. The eleven attack sources range from optimization-based suffixes \citep{zou2023universal,hu2024efficient} to automated black-box search \citep{chao2025jailbreaking,mehrotra2024tree,liu2025autodan} and human-written prompts \citep{shen2024doanythingnow}. In each configuration, one source is held out as the unseen source, and the others are known. Sections~\ref{sec:matched-regimes} and \ref{sec:separation} use 330 leave-one-source-out configurations over seeds $\{0,7,11,23,42\}$ with a 50/25/25 training, calibration, and test split (the \emph{original split}). Section~\ref{sec:defense} uses a separate 50-configuration replay (Appendix~\ref{app:defense}). We compare Persist-3 with SR-DKW and SR-RAW, which apply the Shiryaev--Roberts recursion to DKW-adjusted and raw source-wise $p$-values, respectively (Appendix~\ref{app:method-families}). Each SR method uses either the full history or a window of $W=10$.

\paragraph{Calibration on known sources only.}
The main comparison uses a \emph{long-horizon protocol}: each stream has horizon $H=1000$, and KK and KU streams switch at $\nu=501$. For each method, the threshold is the most aggressive one that keeps the stationary FPR $\le0.01$ and the KK alarm rate $\le0.05$ on known-source validation streams. All settings are frozen before KU evaluation, and all methods are tested on the same streams. Section~\ref{sec:separation} and the memory checks in Appendix~\ref{app:finite-forgetting-check} use a \emph{short-horizon protocol} with 40 observations per segment, and the threshold of Persist-3 is selected as in Section~\ref{sec:calibration}. Stress tests at other budgets are in Appendix~\ref{app:latest-operating-audits}.

\paragraph{Metrics.} 
We define that a \emph{stationary} stream comes from a single known source without any switch. We use the following three metrics for evaluation: (1) The stationary false positive rate (FPR) is the fraction of stationary streams with any alarm. (2) The KK alarm rate is the fraction of KK streams with any alarm. (3) The KU detection rate is the fraction of KU streams whose first alarm occurs after the switch, i.e., in $[\nu,H]$. A stream with an alarm before the switch is not restarted, so it counts as a miss. Since the first two metrics count false alarms, a lower stationary FPR and a lower KK alarm rate are better. In contrast, a higher KU detection rate is better. Confidence intervals come from a bootstrap that resamples held-out sources.

\subsection{Window-based versus cumulative detection}
\label{sec:matched-regimes}

This subsection shows that accumulating evidence over the whole history improves detection, while the window of Persist-3 bounds its memory. Table~\ref{tab:matched-main} gives three observations. 

First, full-history SR-DKW detects more unseen sources than Persist-3 under the same validation caps ($0.648$ vs.\ $0.458$), with a lower KK alarm rate ($0.003$ vs.\ $0.006$). 
In addition, both methods have a stationary FPR of $0/1835$. SR-RAW has an even higher KU detection rate ($0.720$). However, its stationary FPR is $8/1835$. Second, within each SR family, truncating the recursion to $W=10$ reduces KU detection (SR-DKW: $0.648\to0.377$; SR-RAW: $0.720\to0.545$). This is more direct evidence that accumulation helps, since weak but persistent evidence reaches the threshold only after it is accumulated. Third, Persist-3 has bounded memory, while full-history SR does not. Since each decision of Persist-3 uses only the most recent $B=W+L-1=12$ observations, the observations from the old source stop affecting its decision after $B$ post-switch steps (Proposition~\ref{prop:kk-recovery}). In contrast, full-history SR keeps the pre-switch evidence in its state (Remark~\ref{rem:sr-memory}). Appendix~\ref{app:finite-forgetting-check} verifies this bounded memory empirically.

\begin{table*}[t]
\centering
  \caption{Main comparison ($H=1000$, $\nu=501$). Thresholds are selected on known-source validation streams (stationary FPR $\le0.01$, KK alarm rate $\le0.05$). Stationary FPR is reported as counts.}
\label{tab:matched-main}
\footnotesize
\setlength{\tabcolsep}{5pt}
\begin{tabular}{lrrr}
\toprule
Method & Stationary FPR $\downarrow$ & KK alarm rate $\downarrow$ & KU detection rate $\uparrow$ \\
\midrule
Persist-3 & $0/1835$ & $0.006\,[0.004,0.007]$ & $0.458\,[0.335,0.627]$ \\
SR-DKW full & $0/1835$ & $0.003\,[0.002,0.005]$ & $0.648\,[0.490,0.793]$ \\
SR-RAW full & $8/1835$ & $0/10640$ & 0.720 \\
SR-DKW $W=10$ & $0/1835$ & 0.003 & 0.377 \\
SR-RAW $W=10$ & $4/1835$ & 0.005 & 0.545 \\
\bottomrule
\end{tabular}
\end{table*}

\subsection{Nearest-source separation and detectability}
\label{sec:separation}

This subsection shows that the nearest-source separation predicts which unseen sources are hard to detect. Corollary~\ref{cor:nearest-source} predicts that an unseen source is harder to detect when it is close to some known source. After freezing the detector, we estimate the separation of each held-out source by $\widehat\Delta_*(Q)=\min_{k\in[K]}\{\widehat\mu_k(Q)-\tau\}$, where $\widehat\mu_k(Q)$ is the average of $-\log p_k$ over the test prompts from $Q$. Across the eleven held-out attack methods, $\widehat\Delta_*$ is strongly associated with KU detection (Pearson $r=0.98$ [0.94, 1.00]; Spearman $\rho=0.95$ [0.70, 1.00]). The robustness resplit gives Pearson $r=0.981$ [0.947, 0.995]. Thus, attack methods with a small margin against at least one known source are hard to detect, which is consistent with Corollary~\ref{cor:nearest-source}. 

\begin{table*}[!tb]
\centering
\caption{Adding a source monitor to an unchanged guardrail (50 configurations, offline replay). Block rate: fraction of originally successful attacks in KU streams (including the known-source prefix) that are intercepted, with a 95\% bootstrap interval for its increment. Extra KU intervention: fraction of KU-stream prompts escalated by the monitor only. KK intervention: fraction of KK-stream prompts intervened on. Yield: additional interceptions per 100 extra KU interventions. Rates are averaged over configurations. Interception assumes that escalation prevents the harmful response. Details are in Appendix~\ref{app:defense}.}
\label{tab:layered}
\footnotesize
\setlength{\tabcolsep}{5pt}
\begin{tabular}{lcccc}
\toprule
System & Block rate (\%) $\uparrow$ & Extra KU intervention (\%) & KK intervention (\%) $\downarrow$ & Yield $\uparrow$ \\
\midrule
MTK alone & 39.7 & 0.0 & 62.2 & -- \\
\quad $+$ Persist-3 & 47.1 \;($+7.4\,[3.3,11.8]$) & 8.0 & 63.4 \;($+1.1$) & 7.28 \\
\quad $+$ SR-DKW & 49.7 \;($+10.0\,[4.1,16.1]$) & 10.9 & 62.7 \;($+0.4$) & 7.66 \\
\midrule
PG2 alone & 30.4 & 0.0 & 31.0 & -- \\
\quad $+$ Persist-3 & 37.4 \;($+7.0\,[2.6,12.6]$) & 12.1 & 32.7 \;($+1.7$) & 5.32 \\
\quad $+$ SR-DKW & 40.0 \;($+9.6\,[2.8,18.8]$) & 16.1 & 31.7 \;($+0.7$) & 5.83 \\
\bottomrule
\end{tabular}
\end{table*}

\subsection{Detecting new attack sources after guardrail failure}
\label{sec:defense}
This subsection shows that source detection adds protection on top of existing jailbreak defenses.
\paragraph{Setup.}
We treat known and unknown sources differently. 
\begin{itemize}
    \item \textbf{Known sources}: The system developer already knows how to defend against known sources. Thus, on streams with only known sources, we only report how often prompts are escalated.
    \item \textbf{Unknown source}: The considered methods are used to defend against the new source. Thus, we measure how many originally successful attacks in KU streams are blocked.
\end{itemize} We compare each unchanged guardrail $G$, MTK \citep{zhang2026defending} or Prompt Guard 2 (PG2) \citep{chennabasappa2025llamafirewall}, with $G+$Persist-3 and $G+$SR-DKW under the combined action $I_t^{(G+M)}=G_t\vee A_t^{(M)}$ of Section~\ref{sec:post-detection}. The evaluation is an offline replay of cached streams from 50 AdvBench configurations with Llama-2-7B-chat, which contain 1,729 originally successful attack occurrences.

\paragraph{Results.}
Table~\ref{tab:layered} shows three observations. First, adding either monitor increases the block rate for both guardrails. Specifically, Persist-3 increases the block rate by 7.4 and 7.0 percentage points for MTK and PG2, and SR-DKW increases it by 10.0 and 9.6 percentage points. All four bootstrap intervals are above zero. Second, the monitors intervene on more KU prompts. Specifically, the extra KU intervention is 8.0 to 16.1 percentage points, and the yield is 5.32 to 7.66. However, the KK intervention increases by only 0.4 to 1.7 percentage points. Third, SR-DKW blocks more attacks than Persist-3, which is consistent with Section~\ref{sec:matched-regimes}. Specifically, SR-DKW detects more KU streams ($0.851$ vs.\ $0.640$), and its block rate is 2.6 percentage points higher for both guardrails. 

\section{Conclusion}
This paper studies how to detect unseen jailbreak sources from a stream of attack prompts. We formulate the problem as a test of whether the stream comes from one of the known attack sources, and thus a switch among known sources stays within the null. For Persist-3, a window-based detector built on source-wise conformal $p$-values, we show that pre-switch observations stop affecting the decision after $W+L-1$ steps (finite forgetting). In addition, its detection power is controlled by the smallest evidence margin against the known sources (nearest-source separation). Empirically, experiments on jailbreak benchmarks verify these results. As a future direction, it is interesting to analyze the asymptotic behavior of the detector as more online data arrive.

\renewcommand{\refname}{References}
\bibliography{citation}

@article{Lei03072018,
  title={Distribution-free predictive inference for regression},
  author={Lei, Jing and G’Sell, Max and Rinaldo, Alessandro and Tibshirani, Ryan J and Wasserman, Larry},
  journal={Journal of the American Statistical Association},
  volume={113},
  number={523},
  pages={1094--1111},
  year={2018},
  publisher={Taylor \& Francis}
}

@inproceedings{shen2024doanythingnow,
  title={" do anything now": Characterizing and evaluating in-the-wild jailbreak prompts on large language models},
  author={Shen, Xinyue and Chen, Zeyuan and Backes, Michael and Shen, Yun and Zhang, Yang},
  booktitle={Proceedings of the 2024 on ACM SIGSAC Conference on Computer and Communications Security},
  pages={1671--1685},
  year={2024}
}

@article{liu2023autodan,
  title={Autodan: Generating stealthy jailbreak prompts on aligned large language models},
  author={Liu, Xiaogeng and Xu, Nan and Chen, Muhao and Xiao, Chaowei},
  journal={arXiv preprint arXiv:2310.04451},
  year={2023}
}

@article{lin2024understandingjailbreakattacksllms,
  title={Towards understanding jailbreak attacks in llms: A representation space analysis},
  author={Lin, Yuping and He, Pengfei and Xu, Han and Xing, Yue and Yamada, Makoto and Liu, Hui and Tang, Jiliang},
  journal={arXiv preprint arXiv:2406.10794},
  year={2024}
}

@article{ouyang2022training,
  title={Training language models to follow instructions with human feedback},
  author={Ouyang, Long and Wu, Jeffrey and Jiang, Xu and Almeida, Diogo and Wainwright, Carroll and Mishkin, Pamela and Zhang, Chong and Agarwal, Sandhini and Slama, Katarina and Ray, Alex and others},
  journal={Advances in neural information processing systems},
  volume={35},
  pages={27730--27744},
  year={2022}
}

@article{zou2023universal,
  title={Universal and transferable adversarial attacks on aligned language models},
  author={Zou, Andy and Wang, Zifan and Carlini, Nicholas and Nasr, Milad and Kolter, J Zico and Fredrikson, Matt},
  journal={arXiv preprint arXiv:2307.15043},
  year={2023}
}

@article{TibshiraniBCR19,
  title={Conformal prediction under covariate shift},
  author={Tibshirani, Ryan J and Foygel Barber, Rina and Candes, Emmanuel and Ramdas, Aaditya},
  journal={Advances in neural information processing systems},
  volume={32},
  year={2019}
}

@article{wang2025sokevaluatingjailbreakguardrails,
  title={SoK: Evaluating Jailbreak Guardrails for Large Language Models},
  author={Wang, Xunguang and Ji, Zhenlan and Wang, Wenxuan and Li, Zongjie and Wu, Daoyuan and Wang, Shuai},
  journal={arXiv preprint arXiv:2506.10597},
  year={2025}
}

@article{mazeika2024harmbench,
  title={Harmbench: A standardized evaluation framework for automated red teaming and robust refusal},
  author={Mazeika, Mantas and Phan, Long and Yin, Xuwang and Zou, Andy and Wang, Zifan and Mu, Norman and Sakhaee, Elham and Li, Nathaniel and Basart, Steven and Li, Bo and others},
  journal={arXiv preprint arXiv:2402.04249},
  year={2024}
}

@article{chao2024jailbreakbench,
  title={Jailbreakbench: An open robustness benchmark for jailbreaking large language models},
  author={Chao, Patrick and Debenedetti, Edoardo and Robey, Alexander and Andriushchenko, Maksym and Croce, Francesco and Sehwag, Vikash and Dobriban, Edgar and Flammarion, Nicolas and Pappas, George J and Tramer, Florian and others},
  journal={Advances in Neural Information Processing Systems},
  volume={37},
  pages={55005--55029},
  year={2024}
}

@book{vovk2005algorithmic,
  title={Algorithmic learning in a random world},
  author={Vovk, Vladimir and Gammerman, Alexander and Shafer, Glenn},
  year={2005},
  publisher={Springer}
}

@article{shafer2008tutorial,
  title={A tutorial on conformal prediction.},
  author={Shafer, Glenn and Vovk, Vladimir},
  journal={Journal of Machine Learning Research},
  volume={9},
  number={3},
  year={2008}
}

@article{wang2025conformalpredictionadaptiveunknown,
  title={Conformal Prediction Adaptive to Unknown Subpopulation Shifts},
  author={Wang, Nien-Shao and Yaldiz, Duygu Nur and Bakman, Yavuz Faruk and Karimireddy, Sai Praneeth},
  journal={arXiv preprint arXiv:2506.05583},
  year={2025}
}

@article{wong2025weightedpvalues,
  title={Improving Coverage in Combined Prediction Sets with Weighted p-values},
  author={Wong, Gina and Prinster, Drew and Saria, Suchi and Chellappa, Rama and Liu, Anqi},
  journal={arXiv preprint arXiv:2505.11785},
  year={2025}
}

@article{pmlr-v235-angelopoulos24a,
  title={Online conformal prediction with decaying step sizes},
  author={Angelopoulos, Anastasios N and Barber, Rina Foygel and Bates, Stephen},
  journal={arXiv preprint arXiv:2402.01139},
  year={2024}
}

@article{gibbs2021adaptive,
  title={Adaptive conformal inference under distribution shift},
  author={Gibbs, Isaac and Candes, Emmanuel},
  journal={Advances in Neural Information Processing Systems},
  volume={34},
  pages={1660--1672},
  year={2021}
}

@article{fayyazi25a,
  title={Facter: Fairness-aware conformal thresholding and prompt engineering for enabling fair llm-based recommender systems},
  author={Fayyazi, Arya and Kamal, Mehdi and Pedram, Massoud},
  journal={arXiv preprint arXiv:2502.02966},
  year={2025}
}

@article{yang2026multidistributionrobustconformalprediction,
  title={Multi-Distribution Robust Conformal Prediction},
  author={Yang, Yuqi and Jin, Ying},
  journal={arXiv preprint arXiv:2601.02998},
  year={2026}
}

@article{angelopoulos2023conformal,
  title={Conformal prediction: A gentle introduction},
  author={Angelopoulos, Anastasios N and Bates, Stephen},
  journal={Foundations and Trends in Machine Learning},
  volume={16},
  number={4},
  pages={494--591},
  year={2023},
  publisher={Emerald Publishing Limited}
}

@article{hu2024efficient,
  title={Efficient llm jailbreak via adaptive dense-to-sparse constrained optimization},
  author={Hu, Kai and Yu, Weichen and Li, Yining and Yao, Tianjun and Li, Xiang and Liu, Wenhe and Yu, Lijun and Shen, Zhiqiang and Chen, Kai and Fredrikson, Matt},
  journal={Advances in Neural Information Processing Systems},
  volume={37},
  pages={23224--23245},
  year={2024}
}

@article{takemoto2024all,
  title={All in how you ask for it: Simple black-box method for jailbreak attacks},
  author={Takemoto, Kazuhiro},
  journal={Applied Sciences},
  volume={14},
  number={9},
  pages={3558},
  year={2024},
  publisher={MDPI}
}

@inproceedings{chao2025jailbreaking,
  title={Jailbreaking black box large language models in twenty queries},
  author={Chao, Patrick and Robey, Alexander and Dobriban, Edgar and Hassani, Hamed and Pappas, George J and Wong, Eric},
  booktitle={2025 IEEE Conference on Secure and Trustworthy Machine Learning (SaTML)},
  pages={23--42},
  year={2025},
  organization={IEEE}
}

@article{mehrotra2024tree,
  title={Tree of attacks: Jailbreaking black-box llms automatically},
  author={Mehrotra, Anay and Zampetakis, Manolis and Kassianik, Paul and Nelson, Blaine and Anderson, Hyrum and Singer, Yaron and Karbasi, Amin},
  journal={Advances in Neural Information Processing Systems},
  volume={37},
  pages={61065--61105},
  year={2024}
}

@inproceedings{piet2025jailbreaksovertime,
  title={Jailbreaksovertime: Detecting jailbreak attacks under distribution shift},
  author={Piet, Julien and Huang, Xiao and Jacob, Dennis and Chow, Annabella and Alrashed, Maha and Zhao, Geng and Hu, Zhanhao and Sitawarin, Chawin and Alomair, Basel and Wagner, David},
  booktitle={Proceedings of the 18th ACM Workshop on Artificial Intelligence and Security},
  pages={230--241},
  year={2025}
}

@inproceedings{gao2025shaping,
  title={Shaping the safety boundaries: Understanding and defending against jailbreaks in large language models},
  author={Gao, Lang and Geng, Jiahui and Zhang, Xiangliang and Nakov, Preslav and Chen, Xiuying},
  booktitle={Proceedings of the 63rd Annual Meeting of the Association for Computational Linguistics (Volume 1: Long Papers)},
  pages={25378--25398},
  year={2025}
}

@article{he2024jailbreaklens,
  title={Jailbreaklens: Interpreting jailbreak mechanism in the lens of representation and circuit},
  author={He, Zeqing and Wang, Zhibo and Chu, Zhixuan and Xu, Huiyu and Zhang, Wenhui and Wang, Qinglong and Zheng, Rui},
  journal={arXiv preprint arXiv:2411.11114},
  year={2024}
}

@article{zheng2024prompt,
  title={On prompt-driven safeguarding for large language models},
  author={Zheng, Chujie and Yin, Fan and Zhou, Hao and Meng, Fandong and Zhou, Jie and Chang, Kai-Wei and Huang, Minlie and Peng, Nanyun},
  journal={arXiv preprint arXiv:2401.18018},
  year={2024}
}

@article{chen2021evaluating,
  title={Evaluating large language models trained on code},
  author={Chen, Mark},
  journal={arXiv preprint arXiv:2107.03374},
  year={2021}
}

@article{yao2022react,
  title={React: Synergizing reasoning and acting in language models},
  author={Yao, Shunyu and Zhao, Jeffrey and Yu, Dian and Du, Nan and Shafran, Izhak and Narasimhan, Karthik and Cao, Yuan},
  journal={arXiv preprint arXiv:2210.03629},
  year={2022}
}

@inproceedings{spjuth2019combining,
  title={Combining prediction intervals on multi-source non-disclosed regression datasets},
  author={Spjuth, Ola and Br{\"a}nnstr{\"o}m, Robin Carri{\'o}n and Carlsson, Lars and Gauraha, Niharika},
  booktitle={Conformal and Probabilistic Prediction and Applications},
  pages={53--65},
  year={2019},
  organization={PMLR}
}

@article{liu2024multi,
  title={Multi-source conformal inference under distribution shift},
  author={Liu, Yi and Levis, Alexander W and Normand, Sharon-Lise and Han, Larry},
  journal={Proceedings of machine learning research},
  volume={235},
  pages={31344},
  year={2024}
}

@inproceedings{zaffran2022adaptive,
  title={Adaptive conformal predictions for time series},
  author={Zaffran, Margaux and F{\'e}ron, Olivier and Goude, Yannig and Josse, Julie and Dieuleveut, Aymeric},
  booktitle={International Conference on Machine Learning},
  pages={25834--25866},
  year={2022},
  organization={PMLR}
}

@article{shin2022detectors,
  title={E-detectors: A nonparametric framework for sequential change detection},
  author={Shin, Jaehyeok and Ramdas, Aaditya and Rinaldo, Alessandro},
  journal={arXiv preprint arXiv:2203.03532},
  year={2022}
}

@article{saha2026non,
  title={Non-partitioned e-detectors for nonparametric sequential change detection},
  author={Saha, Aytijhya and Ramdas, Aaditya},
  journal={arXiv preprint arXiv:2607.28322},
  year={2026}
}

@article{liu2024flipattack,
  title={Flipattack: Jailbreak llms via flipping},
  author={Liu, Yue and He, Xiaoxin and Xiong, Miao and Fu, Jinlan and Deng, Shumin and Ma, Yingwei and Zhang, Jiaheng and Hooi, Bryan},
  journal={arXiv preprint arXiv:2410.02832},
  year={2024}
}

@inproceedings{liu2025autodan,
  title={Autodan-turbo: A lifelong agent for strategy self-exploration to jailbreak llms},
  author={Liu, Xiaogeng and Li, Peiran and Suh, G Edward and Vorobeychik, Yevgeniy and Mao, Zhuoqing and Jha, Somesh and McDaniel, Patrick and Sun, Huan and Li, Bo and Xiao, Chaowei},
  booktitle={International Conference on Learning Representations},
  volume={2025},
  pages={10313--10360},
  year={2025}
}

@inproceedings{fu2026steering,
  title={Steering Away from Refusal: A Black-box Jailbreak Method Based on First-Token Distribution},
  author={Fu, Shuangjie and Su, Du and Chen, Xin and Sun, Fei and Shen, Huawei and Cheng, Xueqi},
  booktitle={Findings of the Association for Computational Linguistics: ACL 2026},
  pages={25969--25979},
  year={2026}
}

@inproceedings{hu2026jailbreak,
  title={Jailbreak-Zero: A Path to Pareto Optimal Red Teaming for Large Language Models},
  author={Hu, Kai and Aggarwal, Abhinav and Khodabandeh, Mehran and Zhang, David and Hsin, Eric and Chen, Li and Jain, Ankit and Fredrikson, Matt and Bharadwaj, Akash},
  booktitle={Proceedings of the 64th Annual Meeting of the Association for Computational Linguistics (Volume 1: Long Papers)},
  pages={46720--46746},
  year={2026}
}

@article{zhang2026defending,
  title={Defending Jailbreak Attacks on Large Language Models via Manifold Trajectory Kinetics},
  author={Zhang, Hangtao and Zhao, Yucheng and Liu, Sishun and Zhou, Ziqi and Ye, Zeyu and Wan, Wei and Li, Minghui and Hu, Shengshan and Zhang, Yanjun and Liu, Yi and others},
  journal={arXiv preprint arXiv:2606.07335},
  year={2026}
}

@article{chennabasappa2025llamafirewall,
  title={Llamafirewall: An open source guardrail system for building secure ai agents},
  author={Chennabasappa, Sahana and Nikolaidis, Cyrus and Song, Daniel and Molnar, David and Ding, Stephanie and Wan, Shengye and Whitman, Spencer and Deason, Lauren and Doucette, Nicholas and Montilla, Abraham and others},
  journal={arXiv preprint arXiv:2505.03574},
  year={2025}
}

@article{schwartz1956coincidence,
  author  = {Schwartz, Mischa},
  title   = {A Coincidence Procedure for Signal Detection},
  journal = {IRE Transactions on Information Theory},
  volume  = {2},
  number  = {4},
  pages   = {135--139},
  year    = {1956},
  doi     = {10.1109/TIT.1956.1056820}
}

@article{champ1987exact,
  author  = {Champ, Charles W. and Woodall, William H.},
  title   = {Exact Results for {Shewhart} Control Charts with Supplementary Runs Rules},
  journal = {Technometrics},
  volume  = {29},
  number  = {4},
  pages   = {393--399},
  year    = {1987},
  doi     = {10.1080/00401706.1987.10488266}
}

\clearpage
\appendix
\section*{Supplementary Materials}
\section{Robustness and Diagnostics Deferred from the Main Text}
\label{app:maintext-deferred}

\subsection{Robustness resplit}
We use two data splits throughout and never pool them. The \emph{original split} is a 50/25/25 partition over seeds $\{0,7,11,23,42\}$, giving 330 complete configurations, with the frozen Persist-3 threshold $\widehat\tau=2.386$. The \emph{robustness resplit} is a second 50/25/25 partition of the same cached example pool over seeds $\{101,103,107,109,113\}$, with the threshold recalibrated by the same known-only rule, $\widehat\tau=2.374$. The robustness resplit is a split-robustness check on the same cached embedding pool. It is not an independent dataset, not a preregistered confirmatory study, and not a second population of jailbreaks. Its protocol was fixed before KU results were computed, and the detector was not retuned. It gives 10/2425 stationary alarms (FPR $0.0041$), a KK alarm rate of $0.044$, and KU detection $0.414$ $[0.282,0.589]$ (Table~\ref{tab:pseudo-confirmatory-app}).

\subsection{Summary of theory--experiment diagnostics}
\label{sec:theory-empirical-alignment}
Appendix~\ref{app:similarity} reports diagnostics computed after the detector is frozen; they are not used for tuning. They show that (i) the estimated nearest-source separation is strongly associated with KU detection, as predicted by Corollary~\ref{cor:nearest-source}; (ii) two streams that differ only before a switch produce identical Persist-3 decisions after $B=12$ steps, and all first post-switch KK alarms occur within $d\le11$ steps, as predicted by Proposition~\ref{prop:kk-recovery} (Appendix~\ref{app:kk-timing}); and (iii) delay conditional on detection is biased by selection, so we report it only descriptively.

\section{Additional Theoretical Results}
\label{app:additional-theory}
This section collects results referred to in the main text. Proofs are in Appendix~\ref{app:proofs}.

\begin{lemma}[Calibration-conditional tail bound]
\label{lem:cal-tail}
Under Assumption~\ref{ass:monitoring}, fix $\mathcal T$ and $\beta\in(0,1)$. Let $F_k(s)=\Prob_{P_k}\{S_k(X)\le s\mid\mathcal T\}$ and let $\widehat F_k$ be the empirical CDF of the calibration scores. Define
\[
\varepsilon_k(\beta)=\sqrt{\frac{\log(2K/\beta)}{2n_k}},\;
\mathcal G=\bigcap_{k=1}^K\left\{\sup_s|\widehat F_k(s)-F_k(s)|\le\varepsilon_k(\beta)\right\}.
\]
Then $\Prob(\mathcal G\mid\mathcal T)\ge1-\beta$. For every offline realization $\mathcal O$ in $\mathcal G$ and a fresh $X\sim P_k$ independent of $\mathcal O$,
\[
\Prob_{P_k}\{p_k(X)\le u\mid\mathcal O\}\le\min\{1,u+\varepsilon_k(\beta)\},\; u\in[0,1].
\]
Consequently, simultaneously for all $k$,
\[
\E_{P_k}[-\log p_k(X)\mid\mathcal O]\le\bar\mu_k
:=\int_0^{b_k}\min\{1,e^{-z}+\varepsilon_k(\beta)\}\,dz,
\; b_k=\log(n_k+1).
\]
The probability $1-\beta$ averages over calibration, conditional on training; it is not a coverage guarantee conditional on calibration-dependent selection events.
\end{lemma}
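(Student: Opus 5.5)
The lemma has three parts, and I will prove them in order: a DKW event, a conditional tail bound for the $p$-values on that event, and an integrated bound on the mean evidence.

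\textbf{Probability of $\mathcal G$.} Conditional on $\mathcal T$, the score map $S_k$ is fixed by Assumption~\ref{ass:monitoring}. The calibration scores $R_{k,1},\ldots,R_{k,n_k}$ are therefore i.i.d.\ with CDF $F_k$. The Dvoretzky--Kiefer--Wolfowitz inequality with Massart's constant gives
\[
\Prob\bigl\{\sup_s|\widehat F_k(s)-F_k(s)|>\varepsilon\mid\mathcal T\bigr\}\le 2e^{-2n_k\varepsilon^2}.
\]
Setting $\varepsilon=\varepsilon_k(\beta)$ makes this bound equal to $\beta/K$. A union bound over the $K$ sources then gives $\Prob(\mathcal G\mid\mathcal T)\ge1-\beta$. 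No independence across the calibration sets is needed, which matches the assumption.

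\textbf{Tail bound on $\mathcal G$.} Fix a realization $\mathcal O\in\mathcal G$, and write $G(s)=\Prob_{P_k}\{S_k(X)\ge s\}=1-F_k(s^-)$ for the upper tail. Given $\mathcal O$, the map $p_k(\cdot)$ is deterministic, and
\[
p_k(x)=\frac{1+n_k\widehat G(S_k(x))}{n_k+1}\ge \widehat G(S_k(x)),
\]
where $\widehat G(s)=1-\widehat F_k(s^-)$. Left limits preserve the sup-norm bound, so on $\mathcal G$ we have $\widehat G\ge G-\varepsilon_k$. It follows that $\{p_k(X)\le u\}\subseteq\{G(S_k(X))\le u+\varepsilon_k\}$.

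The standard probability-integral-transform argument gives $\Prob\{G(S)\le v\}\le v$ for any real random variable $S$, even with atoms, because $G(S)$ is stochastically at least uniform. Since $X$ is fresh and independent of $\mathcal O$, this yields $\Prob\{p_k(X)\le u\mid\mathcal O\}\le\min\{1,u+\varepsilon_k\}$. The main care point in the whole proof is this step: handling ties and left limits so that the inequality goes in the right direction. I would verify it by working with the upper-tail function directly rather than with $F_k$.

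\textbf{Mean bound.} Let $Z=-\log p_k(X)$. Then $0\le Z\le b_k$, because $p_k\ge 1/(n_k+1)$. The layer-cake formula gives
\[
\E[Z\mid\mathcal O]=\int_0^{b_k}\Prob\{Z>z\mid\mathcal O\}\,dz .
\]
For each $z$, $\Prob\{Z>z\}\le\Prob\{p_k(X)\le e^{-z}\}\le\min\{1,e^{-z}+\varepsilon_k\}$ by the tail bound. Integrating gives $\bar\mu_k$. Since everything was established on the single event $\mathcal G$, the bounds hold simultaneously for all $k$.
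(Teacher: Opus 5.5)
Your proposal is correct and follows essentially the same route as the paper's proof. Both use DKW with the Massart constant plus a union bound for $\mathcal G$. Both then use the chain $p_k\ge 1-\widehat F_k(s^-)\ge 1-F_k(s^-)-\varepsilon_k$ on $\mathcal G$ (via left limits) together with super-uniformity of the upper-tail transform $1-F_k(S_k(X)^-)$. Both finish with the layer-cake integral over $[0,b_k]$ for the mean bound.
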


\begin{corollary}[High-probability DKW version]
\label{cor:dkw-null}
On $\mathcal G$ in Lemma~\ref{lem:cal-tail}, $\mu_{0,j}\le\bar\mu_j$. Under the full-window conditions of Theorem~\ref{thm:persist-null}, whenever $\tau>\bar\mu_j$, its bound holds with $\mu_{0,j}$ replaced by $\bar\mu_j$. The joint calibration event has probability at least $1-\beta$ conditional on $\mathcal T$. This statement does not assert that an empirically selected $\tau$ satisfies $\tau>\bar\mu_j$.
\end{corollary}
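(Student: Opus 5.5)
The corollary has three assertions: $\mu_{0,j}\le\bar\mu_j$ on $\mathcal G$; the bound of Theorem~\ref{thm:persist-null} with $\bar\mu_j$ in place of $\mu_{0,j}$; and the probability statement for $\mathcal G$. We obtain all three by combining Lemma~\ref{lem:cal-tail} with Theorem~\ref{thm:persist-null} and a monotonicity argument.

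\textbf{Step 1: the mean comparison.} Fix an offline realization $\mathcal O\in\mathcal G$. Lemma~\ref{lem:cal-tail} already gives, simultaneously for all $k$,
\[
\E_{P_k}[-\log p_k(X)\mid\mathcal O]\le\bar\mu_k .
\]
Under Assumption~\ref{ass:monitoring}, a stream observation $X_t$ in a $P_j$-segment is independent of $\mathcal O$ and has law $P_j$. Hence it is a fresh draw in the sense of the lemma, and $\mu_{0,j}=\E_{P_j}[Z_j(X)\mid\mathcal O]\le\bar\mu_j$.

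If one wants the lemma's last display derived rather than quoted, use the layer-cake formula. Since $0\le Z_j\le b_j$,
\[
\E[Z_j\mid\mathcal O]=\int_0^{b_j}\Prob\{p_j(X)\le e^{-z}\mid\mathcal O\}\,dz ,
\]
and the integrand is bounded pointwise by $\min\{1,e^{-z}+\varepsilon_j(\beta)\}$.

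\textbf{Step 2: transfer the bound.} Suppose $\tau>\bar\mu_j$. By Step 1, $\tau>\bar\mu_j\ge\mu_{0,j}$, so the hypothesis $\tau>\mu_{0,j}$ of Theorem~\ref{thm:persist-null} holds. The theorem then gives the bound with exponent proportional to $-(\tau-\mu_{0,j})^2$. Because $\tau-\mu_{0,j}\ge\tau-\bar\mu_j>0$ and $x\mapsto\exp(-cx^2)$ is decreasing on $[0,\infty)$ for $c>0$, the bound with $\bar\mu_j$ in place of $\mu_{0,j}$ is weaker and therefore also valid. The full-window conditions (deterministic $t\ge B$, all $B$ observations in one $P_j$ segment) are inherited unchanged.

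\textbf{Step 3: the probability statement.} The claim $\Prob(\mathcal G\mid\mathcal T)\ge1-\beta$ is exactly the first conclusion of Lemma~\ref{lem:cal-tail}. It comes from DKW with the Massart constant applied to each source at level $\beta/K$, followed by a union bound over the $K$ sources. No independence across the calibration sets is needed, consistent with Assumption~\ref{ass:monitoring}.

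\textbf{Main obstacle.} The algebra is trivial; the delicate point is the conditioning order in Step 1. The event $\mathcal G$ is measurable with respect to the calibration data, and the bound on $\mu_{0,j}$ must hold pointwise for each $\mathcal O\in\mathcal G$. It must not hold merely on average over calibration.

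I would make this explicit as follows. Given $\mathcal O$, the calibration scores are fixed, so $\widehat F_j$ is fixed. Then $\Prob\{p_j(X)\le u\mid\mathcal O\}=1-F_j(s_u^-)$ for the appropriate empirical quantile $s_u$. On $\mathcal G$, the DKW band gives $1-F_j(s_u^-)\le u+\varepsilon_j(\beta)$, which is precisely the lemma's tail bound. This also explains the statement's disclaimer: $\tau$ is part of $\mathcal O$ and may depend on the calibration data, so nothing ensures $\tau>\bar\mu_j$. The corollary is therefore only conditional on that inequality.
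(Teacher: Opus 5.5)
Your proposal is correct and follows the paper's proof. On $\mathcal G$, Lemma~\ref{lem:cal-tail} gives $\mu_{0,j}\le\bar\mu_j$ and $\Prob(\mathcal G\mid\mathcal T)\ge1-\beta$, so $\tau-\mu_{0,j}\ge\tau-\bar\mu_j>0$, and the smaller gap can be substituted into Theorem~\ref{thm:persist-null}.

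One minor slip is in your optional aside, which Step~1 does not rely on. The event $\{p_j(X)\le u\}$ is the strict upper-tail event $\{S_j(X)>r\}$, with conditional probability $1-F_j(r)$, not $1-F_j(r^-)$. With tied calibration scores, the left-limit version need not satisfy the DKW bound. This is why the paper's lemma argues via $p_j^{\rm score}(s)\ge1-\widehat F_j(s^-)\ge1-F_j(s^-)-\varepsilon_j$ together with super-uniformity of $1-F_j(S_j(X)^-)$.
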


\begin{corollary}[Finite-horizon stationary control including startup]
\label{cor:finite-horizon}
Under Assumption~\ref{ass:monitoring}, suppose $X_1,\ldots,X_H$ are i.i.d.\ from $P_j$, where $H\ge L$ is deterministic. For $L\le t\le H$, let
\[
v_{t,r}=\sum_{s=t-L+1}^{t}\frac{\Ind\{\max(1,s-W+1)\le r\le s\}}{m_s},
\; r=1,\ldots,t,\; V_t=\sum_{r=1}^t v_{t,r}^2.
\]
If $\tau>\mu_{0,j}$, define
\[
\rho_{j,t}=\exp\!\left\{-\frac{2L^2(\tau-\mu_{0,j})^2}{b_j^2V_t}\right\};
\]
otherwise set $\rho_{j,t}=1$. Then
\[
\Prob_{P_j}\{T_{\rm alarm}\le H\mid\mathcal O\}
\le\min\left\{1,\sum_{t=L}^{H}\rho_{j,t}\right\}.
\]
For $t\ge B$, $V_t=A_{W,L}/W^2$, recovering the full-window bound. On $\mathcal G$, $\mu_{0,j}$ may be replaced by $\bar\mu_j$ when $\tau>\bar\mu_j$, with the trivial bound used otherwise. If that DKW-based bound is at most $\alpha$ on every offline realization in $\mathcal G$, then the unconditional probability of any alarm in the stream is at most $\min\{1,\alpha+\beta\}$. Selecting the threshold on known-source validation data alone does not establish this premise.
\end{corollary}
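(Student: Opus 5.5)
The plan is a union bound over decision times, with each time handled by the same weighted Hoeffding argument as Theorem~\ref{thm:persist-null}, but with the startup weights $m_s=\min\{W,s\}$ kept explicit. Since escalation is based on the first alarm, $\{T_{\rm alarm}\le H\}=\bigcup_{t=L}^{H}\{D_t=1\}$. This gives
\[
\Prob_{P_j}\{T_{\rm alarm}\le H\mid\mathcal O\}\le\min\Bigl\{1,\sum_{t=L}^H\Prob_{P_j}\{D_t=1\mid\mathcal O\}\Bigr\}.
\]
It then suffices to show $\Prob\{D_t=1\mid\mathcal O\}\le\rho_{j,t}$ for each $t$. When $\tau\le\mu_{0,j}$ we have $\rho_{j,t}=1$, so that case is trivial.

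Fix $t$ with $\tau>\mu_{0,j}$. On $\{D_t=1\}$ we have $C_s^{\rm fam}\ge\tau$ for $s=t-L+1,\ldots,t$. Since the minimum over sources is at least $\tau$, in particular $C_{j,s}\ge\tau$, so $\sum_{s}C_{j,s}\ge L\tau$. Exchanging the order of summation gives $\sum_s C_{j,s}=\sum_{r=1}^t v_{t,r}Z_j(X_r)$, with $v_{t,r}$ exactly as defined in the statement.

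Conditional on $\mathcal O$, the $Z_j(X_r)$ are independent by Assumption~\ref{ass:monitoring}. They take values in $[0,b_j]$ and have mean $\mu_{0,j}$, because $\mathcal O$ is frozen and independent of the stream. The weights satisfy $\sum_r v_{t,r}=\sum_s m_s^{-1}\cdot m_s=L$, so the weighted sum has mean $L\mu_{0,j}$. Hoeffding's inequality for $\sum_r v_{t,r}Z_r$, whose terms have ranges $v_{t,r}b_j$, then gives
\[
\Prob\Bigl\{\textstyle\sum_r v_{t,r}Z_r-L\mu_{0,j}\ge L(\tau-\mu_{0,j})\Bigr\}\le\exp\Bigl\{-\frac{2L^2(\tau-\mu_{0,j})^2}{b_j^2V_t}\Bigr\}=\rho_{j,t}.
\]
For the full-window identity when $t\ge B$: every $m_s=W$, so $v_{t,r}=a_h/W$, and $V_t=A_{W,L}/W^2$. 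Substituting recovers \eqref{eq:persist-null-bound}, since $L^2W^2/A_{W,L}=(LW)^2/A_{W,L}$.

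For the DKW statements, on $\mathcal G$ Corollary~\ref{cor:dkw-null} (via Lemma~\ref{lem:cal-tail}) gives $\mu_{0,j}\le\bar\mu_j$. The Hoeffding step only needs an upper bound on the mean, so when $\tau>\bar\mu_j$ we can use the deviation $L(\tau-\bar\mu_j)\le L(\tau-\mu_{0,j})$ instead. For the unconditional bound, split on $\mathcal G$ and its complement:
\[
\Prob(\text{alarm})\le\E[\Prob(\text{alarm}\mid\mathcal O)\Ind_{\mathcal G}]+\Prob(\mathcal G^c)\le\alpha+\beta,
\]
and cap the result at $1$.

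The main obstacle I expect is the measurability and conditioning in this last step. $\mathcal G$ depends only on the calibration data, which are part of $\mathcal O$. Its probability bound is conditional on $\mathcal T$, while the stream bound is conditional on $\mathcal O$. The tower property over $\mathcal O\supseteq\mathcal T$ should reconcile the two.

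A second point to check is the "fresh $X$" requirement. Each stream $X_r$ is independent of $\mathcal O$, so the $p$-values computed from the fixed calibration set are, given $\mathcal O$, deterministic functions of independent inputs. This is what makes the $Z_j(X_r)$ independent, even though marginally they are not i.i.d. uniform.
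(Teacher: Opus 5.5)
Your proposal is correct and follows essentially the same route as the paper's proof. Both use a union bound over $\{D_t=1\}$ for $L\le t\le H$, handle each term by weighted Hoeffding with the startup weights $v_{t,r}$ summing to $L$, reduce to $v_{t,r}=a_h/W$ for full windows, substitute $\bar\mu_j$ on $\mathcal G$ by monotonicity of the exponent, and integrate over $\mathcal O$ while adding $\Prob(\mathcal G^c)\le\beta$ (your conditioning worry resolves as you suggest, since $\mathcal G$ is $\mathcal O$-measurable and the $\mathcal T$-conditional bound integrates by the tower property).
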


\begin{proposition}[Persistence enlarges the validation-feasible threshold set]\label{prop:threshold-relax}
For every stream and deterministic threshold, a Persist-$L$ alarm implies a single-crossing alarm. Hence on any fixed collection of stationary or KK validation streams, the empirical Persist-$L$ alarm rate is no larger than the single-crossing rate at the same threshold; under common alarm caps, the set of feasible thresholds for Persist-$L$ contains that of single crossing.
\end{proposition}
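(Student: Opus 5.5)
The statement is Proposition~\ref{prop:threshold-relax}. The argument is a pathwise event inclusion followed by averaging over a fixed collection; no probability is needed.

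\textbf{Step 1: pathwise inclusion.} Fix any realized stream and any deterministic $\tau$. By \eqref{eq:persist-decision}, $D_t=1$ requires $A_s(\tau)=1$ for all $s\in\{t-L+1,\ldots,t\}$, and in particular $A_t(\tau)=1$. So every time at which Persist-$L$ alarms is also a time at which the single-crossing rule ($L=1$, i.e.\ $D^{(1)}_t=A_t(\tau)$) alarms. Hence
\[
\{T^{(L)}_{\rm alarm}<\infty\}\subseteq\{T^{(1)}_{\rm alarm}<\infty\},
\qquad
T^{(1)}_{\rm alarm}\le T^{(L)}_{\rm alarm}.
\]
The same inclusion holds when attention is restricted to any time interval, for example $[1,H]$. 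The one point to check is the startup convention $D_t=0$ for $t<L$. It only removes Persist-$L$ alarms, so it cannot break the inclusion.

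\textbf{Step 2: empirical rates.} Fix a collection of stationary or KK validation streams. Since $W$, the scores, the $p$-values and $C^{\mathrm{fam}}_t$ are identical for both rules, each stream flagged by Persist-$L$ is also flagged by single crossing. Counting flagged streams and dividing by the collection size gives
\[
\widehat{\mathrm{FPR}}^{(L)}_{\mathrm{val}}(\tau)\le\widehat{\mathrm{FPR}}^{(1)}_{\mathrm{val}}(\tau),
\qquad
\widehat{\mathrm{KK}}^{(L)}_{\mathrm{val}}(\tau)\le\widehat{\mathrm{KK}}^{(1)}_{\mathrm{val}}(\tau).
\]

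\textbf{Step 3: feasible sets.} Let $\tau\in\mathcal T$ satisfy both caps $(\alpha_{\mathrm{stat}},\alpha_{\mathrm{KK}})$ for single crossing. By Step 2, $\tau$ satisfies both caps for Persist-$L$ as well, so the single-crossing feasible set is contained in the Persist-$L$ feasible set. As a corollary, the minimum of the larger set is no larger: $\widehat\tau^{(L)}\le\widehat\tau^{(1)}$ whenever the latter exists.

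\textbf{Main obstacle.} There is no real analytic difficulty. The only care needed is to keep the comparison on the same streams and the same deterministic $\tau$, with all other components ($W$, scores, calibration data) held fixed across the two rules.
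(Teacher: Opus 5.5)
Your proposal is correct and follows the paper's argument: a Persist-$L$ alarm contains at least one crossing, so $A_e^{(L)}(\tau)\le A_e^{(1)}(\tau)$ for each stream. Averaging over the fixed validation collection then orders the empirical rates, which gives containment of the feasible sets under common caps. Your added remarks on the startup convention and on the smaller minimum threshold for Persist-$L$ are valid but go beyond what the paper needs.
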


\paragraph{Why persistence helps after calibration.}
At any fixed threshold, Persist-$L$ is at least as conservative as single crossing. Its empirical gain comes from \emph{recalibration}: by removing short excursions under known sources, persistence enlarges the set of thresholds satisfying the validation caps (Proposition~\ref{prop:threshold-relax}), which allows a lower threshold. This is consistent with the short-horizon comparison under common caps, where KU detection increases from $0.3755$ for single crossing to $0.4663$ for Persist-3 (Table~\ref{tab:short-horizon}).

\begin{proposition}[Score-space identifiability limit]
\label{prop:identifiability}
Fix $\mathcal O$ and compare i.i.d.\ streams from $Q$ and $P_j$. If the one-observation vector $(p_1(X),\ldots,p_K(X))$ has the same law under both sources, then any detector using only these score vectors, with the same fixed initial state, has the same alarm-time law. The statement also holds conditional on a common pre-change history when each future stream is i.i.d.\ and independent of that history.
\end{proposition}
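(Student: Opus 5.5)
The plan is to show that a detector whose only input is the score-vector sequence is a fixed measurable map of that sequence, and then to check that this sequence has the same joint law under $Q$ and under $P_j$. I would first formalize "any detector using only these score vectors, with the same fixed initial state": conditional on $\mathcal O$ (which fixes the scores $S_k$, the calibration scores, and hence the maps $x\mapsto p_k(x)$), the detector is a deterministic or independently randomized update rule. It takes a state $s_{t-1}$ and the vector $V_t=(p_1(X_t),\ldots,p_K(X_t))$ and returns $s_t$ and an alarm indicator. Then for every $t$, the event $\{T_{\rm alarm}\le t\}$ is a measurable function $g_t(V_1,\ldots,V_t,U)$, where $U$ is internal randomness independent of the stream. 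Persist-3, SR-DKW and SR-RAW all fit this form, because $C_{k,t}$ is a function of the $-\log p_k(X_r)$.

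Next I would compute the joint law of $(V_1,\ldots,V_t)$. By Assumption~\ref{ass:monitoring}, conditional on $\mathcal O$ the $X_t$ are i.i.d.\ from the stream distribution, and $x\mapsto V(x)$ is a fixed measurable map given $\mathcal O$. So the $V_t$ are i.i.d.\ with the pushforward law. The hypothesis says this pushforward law is the same under $Q$ and under $P_j$. Therefore the joint law of $(V_1,\ldots,V_t,U)$ is the same product measure in both cases, for every $t$. Applying $g_t$ gives $\Prob_Q\{T_{\rm alarm}\le t\mid\mathcal O\}=\Prob_{P_j}\{T_{\rm alarm}\le t\mid\mathcal O\}$ for all $t$, including $t=\infty$ through the complement. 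Hence the alarm-time laws coincide. An equivalent route is to use Kolmogorov extension to get equality of the laws of the whole sequence $(V_t)_{t\ge1}$, and then note that $T_{\rm alarm}$ is a measurable functional of that sequence.

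For the conditional statement, fix a common pre-change history $h$ and the detector state $s(h)$ it produces. Because the future stream is i.i.d.\ and independent of the history, conditioning on $h$ leaves the future $V$'s i.i.d.\ with the same common law under either source. The detector then starts from the same state $s(h)$ in both cases, and the same argument applies with the initial state replaced by $s(h)$. I would integrate over $h$ only if an unconditional statement is also wanted.

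The main obstacle is measure-theoretic care rather than any hard estimate. Two points need attention. First, the $p$-values are functions of $\mathcal O$, so all statements must be made conditional on $\mathcal O$, where the maps are frozen; the conformal dependence induced by reusing the calibration sample then disappears. Second, "using only the score vectors" should be stated precisely so that the detector's internal randomness is independent of $X$. After that, the result is a direct pushforward argument.
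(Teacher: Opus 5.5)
Your proposal is correct and follows the same route as the paper's proof. Conditional on $\mathcal O$, the score vectors are i.i.d.\ pushforwards of the stream distribution, so equality of the one-observation law gives equality of the law of the whole score process, and hence of any measurable alarm-time law; the conditional case follows by starting both streams from the common post-history state. Your explicit handling of internal randomness independent of the stream, and of the event $\{T_{\rm alarm}=\infty\}$ through the finite-$t$ events, is a harmless refinement of the paper's short argument.
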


That is, an unseen source that induces the same distribution of $p$-values as a known source cannot be detected by any detector based on these $p$-values, whatever its memory. In particular, if $Q$ and $P_j$ induce the same distribution of $\Pi(X)$, then every score and $p$-value computed from this projection has the same distribution under the two sources. Thus, changes confined to the discarded directions can be invisible to the detector. This limits power. However, it does not affect the validity of the source-wise conformal calibration.

\begin{corollary}[Post-change delay on disjoint blocks]
\label{cor:delay}
Under Theorem~\ref{thm:persist-power}, suppose $Q$ persists for $H$ observations starting at $\nu$, where $H\in\mathbb N\cup\{\infty\}$. Let
\[
q=L\sum_{k=1}^K\exp\{-2W\Delta_k^2(Q)/b_k^2\}<1,\; B=W+L-1,
\]
and define the delay to the first fully post-change alarm of the continued statistic $D_t$ (Section~\ref{sec:persistence}) by
\[
T_{\nu,H}=\inf\{d\in\mathbb N:B\le d\le H,\ D_{\nu+d-1}=1\},
\; \inf\varnothing=\infty.
\]
For each integer $m\ge0$ with $mB\le H$,
\[
\Prob_Q\{T_{\nu,H}>mB\mid\mathcal O\}\le q^m.
\]
For finite $H$, this gives $\Prob_Q\{T_{\nu,H}=\infty\mid\mathcal O\}\le q^{\lfloor H/B\rfloor}$. If $H=\infty$, it also gives
\[
\E_Q[T_{\nu,\infty}\mid\mathcal O]\le\frac{B}{1-q}.
\]
These bounds concern the continued statistic, not delay conditional on successful detection within a finite horizon or on the detector not having stopped before $\nu$.
\end{corollary}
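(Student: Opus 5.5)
The statement to prove is Corollary~\ref{cor:delay}. The idea is to split the post-change stretch into disjoint blocks of length $B$ and apply Theorem~\ref{thm:persist-power} to the decision at the end of each block. For $i=1,\ldots,m$, set $t_i=\nu+iB-1$; this is the decision at delay $d=iB$. The decision $D_{t_i}$ depends only on $X_{t_i-B+1},\ldots,X_{t_i}$, which is the block $\{\nu+(i-1)B,\ldots,\nu+iB-1\}$, by the finite-memory argument behind Proposition~\ref{prop:kk-recovery}. When $mB\le H$, all $m$ blocks lie inside the $Q$-segment. Each $t_i$ is deterministic and satisfies $\nu+B-1\le t_i<\nu+H$, so Theorem~\ref{thm:persist-power} gives $\Prob_Q\{D_{t_i}=0\mid\mathcal O\}\le q$.

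Next we use independence across blocks. Under Assumption~\ref{ass:monitoring}, the observations in the segment are i.i.d.\ conditional on $\mathcal O$. The $p$-value maps $p_k$ are fixed given $\mathcal O$, because the calibration data are included in $\mathcal O$. Hence the events $\{D_{t_i}=0\}$, being functions of disjoint blocks, are conditionally independent given $\mathcal O$. Conditioning on $\mathcal O$ is what prevents the shared calibration sample from creating dependence; this is the one point to check carefully.

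The event $\{T_{\nu,H}>mB\}$ means there is no alarm at any delay $d\in[B,mB]$. This is contained in $\bigcap_{i=1}^m\{D_{t_i}=0\}$, so its probability is at most $q^m$. The case $m=0$ is trivial.

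For finite $H$, $\{T_{\nu,H}=\infty\}\subseteq\{T_{\nu,H}>\lfloor H/B\rfloor B\}$, which gives the bound $q^{\lfloor H/B\rfloor}$. For $H=\infty$, write
\[
\E[T\mid\mathcal O]=\sum_{n\ge0}\Prob\{T>n\mid\mathcal O\}\le\sum_{m\ge0}B\,\Prob\{T>mB\mid\mathcal O\}.
\]
This holds because for $n\in[mB,(m+1)B)$ we have $\Prob\{T>n\}\le\Prob\{T>mB\}$. The right-hand side is at most $B\sum_{m\ge0} q^m=B/(1-q)$. Finiteness of $T$ almost surely follows from $q^m\to0$.

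The main obstacle is the independence step. It relies on the disjoint-block reduction, where each block decision is measurable with respect to its own block only. It also relies on conditioning on all of $\mathcal O$, including the calibration data and $\tau$, rather than only on $\mathcal T$.
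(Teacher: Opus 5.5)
Your proposal is correct and follows the paper's proof: you evaluate the continued statistic at the block endpoints $t_i=\nu+iB-1$ and note that these decisions are functions of disjoint i.i.d.\ blocks, hence independent conditional on $\mathcal O$. Each decision fails with probability at most $q$ by Theorem~\ref{thm:persist-power}, and the miss and expectation bounds then follow from $m=\lfloor H/B\rfloor$ and from summing the tail in groups of $B$, exactly as in the paper. Your explicit point that conditioning on all of $\mathcal O$, calibration data included, is what fixes the $p$-value maps and makes the block decisions independent is a useful detail that the paper leaves implicit.
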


\begin{corollary}[Post-enrollment coverage under conditional exchangeability]
\label{cor:post_enrollment_mdcp}
Let $\mathcal H$ specify the enrollment history, the enlarged set of $K+1$ joint laws of $(X,Y)$, and fixed outcome-prediction scores $s_k$. For each source $k$, suppose its labeled calibration block of size $N_k$ and a future labeled observation from that source are exchangeable conditional on $\mathcal H$. The block of the new source is collected after the alarm and does not reuse the observations that triggered it. Define
\[
\pi_k(x,y)=\frac{1+\sum_{i=1}^{N_k}\Ind\{s_k(X_{k,i},Y_{k,i})\ge s_k(x,y)\}}{N_k+1},
\;
\Gamma_\alpha(x)=\{y:\max_{k\le K+1}\pi_k(x,y)>\alpha\}.
\]
For every $j\le K+1$ and $\alpha\in(0,1)$,
\[
\Prob_j\{Y\in\Gamma_\alpha(X)\mid\mathcal H\}\ge1-\alpha.
\]
This is the coverage part of MDCP \citep{yang2026multidistributionrobustconformalprediction}, not an efficiency claim. Conditional exchangeability must hold for every retained source as well as the new one, and it does not follow merely from collecting new-source data after detection. Fresh independent calibration for each source after fixing its prediction score is sufficient; reusing older calibration data requires a separate justification.
\end{corollary}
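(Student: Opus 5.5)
The plan is to reduce the statement to the standard split-conformal coverage argument, applied once per source and then combined through the union structure of $\Gamma_\alpha$. Fix $j\le K+1$ and draw a future labeled pair $(X,Y)$ from source $j$. By the definition of $\Gamma_\alpha$,
\[
\{Y\notin\Gamma_\alpha(X)\}=\Bigl\{\max_{k\le K+1}\pi_k(X,Y)\le\alpha\Bigr\}\subseteq\{\pi_j(X,Y)\le\alpha\}.
\]
It therefore suffices to prove the single-source bound $\Prob_j\{\pi_j(X,Y)\le\alpha\mid\mathcal H\}\le\alpha$. The maximum over sources plays the role that the minimum over sources plays in $C_t^{\mathrm{fam}}$: covering under the true source alone is enough. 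Unlike Bonferroni, no multiplicity correction is needed.

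For the single-source bound, I would condition on $\mathcal H$. This freezes the score $s_j$ as a deterministic function, so the $N_j+1$ scores
\[
s_j(X_{j,1},Y_{j,1}),\ldots,s_j(X_{j,N_j},Y_{j,N_j}),\; s_j(X,Y)
\]
are exchangeable, by the hypothesis that the calibration block and the future observation are exchangeable conditional on $\mathcal H$. I would then repeat the rank argument used for Proposition~\ref{prop:source_validity}, with the pair $(x,y)$ in place of $x$. The quantity $(N_j+1)\pi_j(X,Y)$ counts the scores, including the test score itself, that are at least as large as the test score. It is therefore the reverse rank of the test score among the $N_j+1$ exchangeable values, with ties broken in the conservative direction.

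The key step is to show that this rank is stochastically no smaller than a uniform draw on $\{1,\ldots,N_j+1\}$. I would do this by averaging, over the exchangeable positions, the indicator that the position's count falls below $\lfloor\alpha(N_j+1)\rfloor$. This gives
\[
\Prob_j\{\pi_j\le\alpha\mid\mathcal H\}\le\frac{\lfloor\alpha(N_j+1)\rfloor}{N_j+1}\le\alpha.
\]
Ties only increase each count, so they only help. The argument holds for every $j$ simultaneously, because only source $j$'s own block is used; independence across sources is never needed.

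The main obstacle is justifying the conditioning on $\mathcal H$ for the newly enrolled source $K+1$. Its block is gathered after a data-dependent alarm time, and $s_{K+1}$ may itself be fitted. I would handle this by taking the exchangeability of the post-alarm block with future draws, conditional on $\mathcal H$, as given by hypothesis, rather than deriving it. The alarm time, the enrollment decision, and the fitted score are all $\mathcal H$-measurable. Because the block excludes the triggering observations, the selection event lives in $\mathcal H$ and does not break the rank symmetry. I would remark that this is exactly why the statement imposes the assumption explicitly, and that fresh calibration after fixing $s_k$ is a sufficient condition for it.
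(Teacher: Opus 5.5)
Your proposal is correct and follows the paper's proof. You reduce non-coverage to $\{\pi_j(X,Y)\le\alpha\}$ via $\max_k\pi_k\ge\pi_j$, then apply the rank argument of Proposition~\ref{prop:source_validity} conditional on $\mathcal H$ using only source $j$'s own block, so no independence across calibration blocks is needed. Two small points: the counting step should bound the event that the count is at most $\lfloor\alpha(N_j+1)\rfloor$, not below it, and your remark that excluding the triggering observations keeps selection from breaking rank symmetry is neither needed nor implied by the statement, which stresses that post-alarm collection alone does not give exchangeability, so rely purely on the assumed conditional exchangeability, as you ultimately do.
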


No validated post-alarm enrollment evaluation is included in the empirical claims of this paper. Corollary~\ref{cor:post_enrollment_mdcp} describes a conditional downstream use case, not an experimentally demonstrated coverage-recovery procedure.

\section{Related Work}
\label{app:related}

\paragraph{Conformal prediction.}
Conformal prediction provides a general framework for constructing prediction sets with finite-sample, distribution-free validity under exchangeability \citep{vovk2005algorithmic,shafer2008tutorial}.
In split conformal prediction, the available data are divided into training and calibration subsets. A predictive model or nonconformity score is fitted using the training data, while the held-out calibration data determine the finite-sample prediction threshold. For a target miscoverage level $\alpha\in(0,1)$, this construction provides marginal coverage of at least $1-\alpha$ without requiring a correctly specified predictive model \citep{Lei03072018,angelopoulos2023conformal}.
This model-agnostic property is particularly useful when the underlying predictor is a complex neural network or large language model.

Our detection problem uses conformal calibration for a different purpose. Rather than constructing only a prediction set for a downstream label, we first convert source-wise scores into calibrated source-wise $p$-values, which measure how typical a prompt is under each known source. A separate set of conformal quantities is reserved for downstream prediction of the jailbreak outcome (Corollary~\ref{cor:post_enrollment_mdcp}). This distinction between calibration for detecting unseen sources and calibration for outcome prediction is central to the proposed framework.

\paragraph{Conformal prediction across multiple sources.}
Classical conformal prediction assumes that calibration and future observations are exchangeable. This assumption becomes more restrictive when data originate from heterogeneous populations or sources. Multi-source conformal methods study how prediction validity can be retained when calibration data arise from multiple distributions \citep{liu2024multi,spjuth2019combining}. One class of approaches calibrates separately for each source and aggregates the resulting source-wise evidence. Weighted combinations of conformal $p$-values can improve the efficiency of such combined prediction sets \citep{wong2025weightedpvalues}.

Multi-distribution robust conformal prediction (MDCP) further considers prediction sets that provide coverage simultaneously across a collection of calibrated distributions \citep{yang2026multidistributionrobustconformalprediction}. Its guarantee covers mixtures of the calibrated sources as well as each source separately, so a deployment distribution that equals no single $P_k$ may still lie inside its coverage guarantee.

Therefore, our question is not whether the coverage of MDCP still holds. Instead, we ask whether the current segment is generated by a single calibrated source, that is, whether
\[
Q\notin\{P_1,\ldots,P_K\},
\]
which is the quantity a defender needs in order to decide whether the calibrated source family still describes the incoming traffic. An alarm signals that this membership is implausible; it does not by itself establish that downstream prediction sets have lost coverage. If the detected source is later enrolled as $P_{K+1}$ with a fresh calibration block, MDCP applies to the enlarged family under the conditions of Corollary~\ref{cor:post_enrollment_mdcp}.

\paragraph{Conformal prediction under distribution shift.}
A complementary line of work studies conformal inference when the test distribution differs from the calibration distribution. Covariate-shift methods reweight calibration observations when the density ratio between training and test distributions can be estimated \citep{TibshiraniBCR19}. More recent approaches consider unknown changes in subpopulation proportions and adapt calibration thresholds or weights to preserve prediction validity under such shifts \citep{wang2025conformalpredictionadaptiveunknown}.

These methods primarily address changes within a specified or estimable shift model. The question considered in this paper is different. We do not assume that the unseen source can be expressed through a known density ratio, subpopulation mixture, or predetermined shift family. Instead, we first ask whether the observed stream is still generated by any of the calibrated sources. This changes the problem from adapting a prediction threshold under a known form of shift to detecting when the set of calibrated sources itself is insufficient.

\paragraph{Online conformal prediction.}
Online conformal methods address sequential prediction when the data distribution may evolve over time. Adaptive conformal inference updates miscoverage thresholds as observations arrive \citep{gibbs2021adaptive}, while related methods extend these ideas to time series and other nonstationary settings \citep{zaffran2022adaptive}. Decaying step-size procedures can improve stability when the underlying stream is approximately stationary \citep{pmlr-v235-angelopoulos24a}. Online conformal ideas have also been applied to downstream objectives in LLM systems, including fairness-aware decision rules \citep{fayyazi25a}.

Our objective differs from online threshold adaptation. We model the stream as segments, each generated by a single source (Section~\ref{sec:formulation}), and test whether the current distribution belongs to the finite collection
\[
\{P_1,\ldots,P_K\}.
\]
Therefore, the main online quantity is a test statistic for this hypothesis, not an adaptively updated threshold for outcome prediction.

\paragraph{Jailbreak attacks and evolving attack distributions.}
Jailbreak attacks attempt to induce policy-violating behavior from aligned language models through adversarial prompting strategies \citep{zou2023universal,liu2023autodan}. Attack mechanisms include optimization-based suffixes, automated black-box search, and iterative prompt generation. Benchmarking frameworks such as HarmBench and JailbreakBench provide standardized protocols for evaluating attack success and guardrail robustness \citep{mazeika2024harmbench,chao2024jailbreakbench}.

A central challenge is that jailbreak strategies evolve over time. Longitudinal evaluations demonstrate that detection systems can experience substantial degradation under attack distribution shift \citep{piet2025jailbreaksovertime}. Recent guardrail surveys similarly emphasize that performance varies across attacks and that generalization to new strategies remains an important deployment challenge \citep{wang2025sokevaluatingjailbreakguardrails}. This motivates treating the attack source itself as a statistical object that may change during deployment. \citet{piet2025jailbreaksovertime} also propose two stream-level defenses. Their continuous detector is retrained on its own predictions, and their active monitor recognizes jailbreaks by whether they make the model answer known harmful prompts. Our detector uses different information. Specifically, it keeps the source calibration frozen and uses victim-model representations before any response is returned. Thus, it tests whether a recent segment is explained by any known source. Our benchmark replay does not compare against their procedures, so we make no claim of empirical superiority over them.

\paragraph{Representation-space analysis of jailbreak attacks.}
Several recent studies analyze jailbreak behavior through internal LLM representations, not only through prompt text. \citet{lin2024understandingjailbreakattacksllms} show that harmful and benign prompts can occupy distinct regions of hidden representation space in aligned language models. Subsequent work further studies safety boundaries in representation space \citep{gao2025shaping} and analyzes jailbreak behavior through model circuits and refusal-related pathways \citep{he2024jailbreaklens}. Representation-based safeguarding methods also manipulate internal model signals to steer generation toward safer behavior \citep{zheng2024prompt}.

These observations motivate our use of the victim-model hidden representation $X_t$ as the detection signal. The detector does not require the jailbreak outcome during deployment. The score measures how typical $X_t$ is under attack source $k$. It does not measure whether the current prompt is itself a successful jailbreak.

\paragraph{Prompt-level detection and stream-level detection of unseen sources.}
Most jailbreak guardrails are naturally formulated as prompt-level or response-level classifiers. Their objective is to determine whether a specific prompt, response, or interaction should be considered unsafe. Our problem is related but statistically distinct.

We consider
\[
H_0:Q\in\{P_1,\ldots,P_K\},
\]
where $Q$ denotes the distribution generating the current segment. The goal is therefore not to assign a harmfulness label to one observation. Instead, we determine whether the stream is still generated by at least one known attack source.

This distinction also clarifies the role of our method in a deployed safety system. A prompt-level guardrail can continue to screen individual inputs, while the proposed detector signals when the distribution of incoming attack representations has changed enough to warrant escalation, enrollment of a new source, or recalibration. Therefore, the two mechanisms can operate together as complements, which is the design evaluated in Section~\ref{sec:defense}.

\paragraph{Relation to e-detectors.}
The closest sequential change-detection literature deserves a detailed comparison. \citet{shin2022detectors} construct e-detectors by aggregating e-processes started at candidate change points and analyze false-alarm control in terms of the average run length (ARL). \citet{saha2026non} develop non-partitioned e-detectors over broad composite distribution classes with guarantees on the ARL or the probability of false alarm (PFA) under suitable weighting. Those constructions already handle composite nulls by taking an infimum over the candidate null distributions. Thus, aggregating over the $K$ known sources is not by itself a new device. However, our null hypothesis differs in what it contains, namely a set of calibrated sources. Thus, a switch from $P_i$ to $P_j$ is not a detection target when both are known. The window-based statistic is therefore designed for finite forgetting and for testing whether one known source explains a whole recent window, not for accumulating evidence for an arbitrary change point. Table~\ref{tab:positioning} summarizes the differences.

\begin{table}[H]
\centering
\caption{Comparison with the cited common-law change-detection formulation. The entries describe the stated null hypotheses and do not claim a limitation of every e-detector construction. ``KK switch'' is a switch between two known sources; ARL is the average run length and PFA the probability of false alarm.}
\label{tab:positioning}
\footnotesize
\begin{tabular}{p{0.20\textwidth}p{0.34\textwidth}p{0.36\textwidth}}
\toprule
Aspect & Cited common-law formulation & Persist-$L$ in this work \\
\midrule
Null hypothesis & one common no-change law from a specified class & the current segment follows one known source \\
KK switch & departs from the common-law null & not a target; a decision that mixes two known sources can still alarm \\
Composite structure & candidate null laws can be combined by an infimum & source minimum is taken after window averaging \\
Memory & cumulative, or restarted at candidate change points & recent window (Persist-3) or cumulative (SR-DKW) \\
Main guarantee & ARL or PFA control, depending on the construction & source-wise conformal validity, finite forgetting, fixed-window power \\
Role of e-processes & main detection statistic & auxiliary check only (Appendix~\ref{app:auxiliary}) \\
\bottomrule
\end{tabular}
\end{table}

\paragraph{Sequential hypothesis testing and e-processes.}
Repeated detection introduces an additional statistical challenge. A fixed-time hypothesis test does not automatically retain its validity when the practitioner checks a statistic at every step and stops whenever sufficient evidence is observed. Nonnegative supermartingales and e-processes provide a natural framework for controlling this form of repeated testing.

For completeness, Appendix~\ref{app:auxiliary} records a DKW-adjusted product statistic that has crossing control under a fixed known source, conditional on a high-probability calibration event. This statistic is not one of our detectors, does not appear in the comparisons, and is empirically very conservative. SR-DKW and SR-RAW are separate cumulative recursions and are not covered by that result. The multi-source structure nevertheless introduces the same logical requirement:
\[
H_0=\bigcup_{k=1}^{K}H_{0,k},
\;
H_{0,k}:Q=P_k.
\]
Rejecting a single $H_{0,k}$ does not establish an unseen source. The product statistic instead accumulates, at each step, the largest $p$-value over sources; it does not run separate source-wise sequential tests.

The main paper therefore does not use the product statistic as evidence of better performance. The central comparison is between the window-based Persist-3 and the cumulative SR detectors under the same calibration on known sources.

\paragraph{Positioning of the proposed framework.}
The proposed framework combines ideas that have usually been studied separately. Multi-source conformal methods provide uncertainty guarantees across heterogeneous calibrated distributions. Online and sequential methods provide tools for repeated testing over a stream. Jailbreak detection methods exploit model outputs or internal representations to identify unsafe or anomalous behavior.

Our contribution lies at their intersection. We use victim-model representations to construct source-wise conformal evidence, formulate the detection of unseen sources as a test of a union of source hypotheses, use a recent window for the finite-memory detector, and connect a detection event to stream-level escalation on top of prompt-level guardrails and, under additional conditions, to fresh enrollment of the new source and downstream multi-distribution conformal prediction.

\section{Implementation Details of Persist-3}
\label{app:algorithm}

\begin{figure*}[t]
\centering
\tcbset{
  pipelinebox/.style={
    colback=black!2,colframe=black!65,boxrule=0.45pt,arc=1.2mm,
    left=1.0mm,right=1.0mm,top=0.8mm,bottom=0.8mm,
    height=15mm,valign=center,halign=center,fontupper=\scriptsize,
    before skip=0pt,after skip=0pt
  }
}
\begin{minipage}[c]{0.21\textwidth}
  \begin{tcolorbox}[pipelinebox]
    \textbf{Known-source calibration}\par
    Source-wise scores + conformal calibration
  \end{tcolorbox}
\end{minipage}%
\makebox[0.03\textwidth][c]{$\Longrightarrow$}%
\begin{minipage}[c]{0.19\textwidth}
  \begin{tcolorbox}[pipelinebox]
    \textbf{Source-wise evidence}\par
    Conformal $p_{k,t}$ against each known source
  \end{tcolorbox}
\end{minipage}%
\makebox[0.03\textwidth][c]{$\Longrightarrow$}%
\begin{minipage}[c]{0.33\textwidth}
  \begin{tcolorbox}[pipelinebox]
    \textbf{Aggregation (different orders)}\par
    \textbf{Persist-3:} window mean per source $\to$ $\min$ over sources $\to$ $L$ crossings\par
    \textbf{SR-DKW:} DKW adjust $\to$ $\max$ over sources $\to$ SR recursion
  \end{tcolorbox}
\end{minipage}%
\makebox[0.03\textwidth][c]{$\Longrightarrow$}%
\begin{minipage}[c]{0.17\textwidth}
  \begin{tcolorbox}[pipelinebox]
    \textbf{Action}\par
    Escalate on top of an unchanged guardrail
  \end{tcolorbox}
\end{minipage}
\caption{Overview of the pipeline. The two detectors combine the same source-wise conformal $p$-values in different orders. Persist-3 averages the evidence against each known source over a recent window and only then takes the minimum over sources. Thus, one known source must explain the whole window (Section~\ref{sec:detectnew}). In contrast, SR-DKW takes the largest DKW-adjusted $p$-value over sources at each step and accumulates over the whole history (Appendix~\ref{app:auxiliary}). An alarm activates escalation on top of an unchanged guardrail, which keeps screening every prompt. Realized alarm rates are in Table~\ref{tab:matched-main}.}
\label{fig:pipeline}
\end{figure*}

\paragraph{Precedent for the persistence rule.}
Requiring $L$ consecutive crossings is the special case $M=N=L$ of the $M$-out-of-$N$ rule in radar detection, which declares a detection when at least $M$ of $N$ successive tests exceed a threshold \citep{schwartz1956coincidence}. It is also closely related to the runs rules used with Shewhart control charts \citep{champ1987exact}.

\begin{algorithm}[t]
\caption{Persist-$L$ (Persist-3 uses $W=10$, $L=3$)}
\label{alg:detectnew}
\begin{algorithmic}[1]
\STATE \textbf{Input:} known-source training and calibration data; stream $\{X_t\}$; window length $W$; persistence length $L$; threshold $\widehat\tau$ selected by known-source validation for the protocol in use (values in the Calibration paragraph below)
\STATE Fit a shared PCA map $\Pi$ on pooled known-source training data, with dimension at most 32
\FOR{$k=1,\ldots,K$}
    \STATE Fit $\widehat\mu_k$ and Ledoit--Wolf $\widehat\Sigma_k$ on $\Pi(X)$ from source $k$
    \STATE Store $\mathcal R_k=\{S_k(X_i):X_i\in\mathcal D_{\mathrm{cal}}^{(k)}\}$
\ENDFOR
\FOR{$t=1,2,\ldots$}
    \STATE Observe $X_t$ and set $z_t\gets\Pi(X_t)$
    \FOR{$k=1,\ldots,K$}
        \STATE $S_{k,t}\gets(z_t-\widehat\mu_k)^\top\widehat\Sigma_k^+(z_t-\widehat\mu_k)$
        \STATE $p_{k,t}\gets
        \dfrac{1+\sum_{r\in\mathcal R_k}\Ind\{r\ge S_{k,t}\}}
        {|\mathcal R_k|+1}$
        \STATE $C_{k,t}\gets$ mean of $-\log p_{k,r}$ over $r=\max(1,t-W+1),\ldots,t$
    \ENDFOR
    \STATE $C_t^{\mathrm{fam}}\gets\min_k C_{k,t}$
    \IF{$t\ge L$ and $C_s^{\mathrm{fam}}\ge\widehat\tau$ for all $s=t-L+1,\ldots,t$}
        \STATE \textbf{return} $(\texttt{true},t)$
    \ENDIF
\ENDFOR
\STATE \textbf{return} $(\texttt{false},\mathrm{NA})$
\end{algorithmic}
\end{algorithm}

\paragraph{Calibration.}
In the short-horizon protocol, the caps of Section~\ref{sec:calibration} are set from the validation alarm rates $\mathrm{FPR}^{\mathrm{single}}_{\mathrm{val}}$ and $\mathrm{KK}^{\mathrm{single}}_{\mathrm{val}}$ of the single-crossing rule:
\[
\mathrm{FPR}_{\mathrm{cap}}
=
\max\{0.025,\mathrm{FPR}^{\mathrm{single}}_{\mathrm{val}}\},
\quad
\mathrm{KK}_{\mathrm{cap}}
=
\max\{0.05,\mathrm{KK}^{\mathrm{single}}_{\mathrm{val}}\}.
\]
For the frozen experiment, these caps are $0.031313$ and $0.051678$. With $W=10$ and $L=3$ fixed, and $\mathcal T$ the prespecified grid of 1183 candidate thresholds,
$
\widehat{\tau}
=
\min\left\{
\tau\in\mathcal T:
\widehat{\mathrm{FPR}}_{\mathrm{val}}(\tau)\le 0.031313,
\;
\widehat{\mathrm{KK}}_{\mathrm{val}}(\tau)\le 0.051678
\right\},
$
which gives $\widehat{\tau}\approx 2.386$ on the original split, with validation FPR $0.00502$ and validation KK alarm rate $0.04980$. For the robustness resplit (Appendix~\ref{app:maintext-deferred}), the same rule gives $\widehat\tau=2.374$. The long-horizon protocol of Section~\ref{sec:matched-regimes} uses its own caps (stationary alarm $\le0.01$, KK alarm $\le0.05$). The selected thresholds are $2.556$ for Persist-3, $68.69$ for SR-DKW full, $2.84\times10^{39}$ for SR-RAW full, $27.12$ for SR-DKW with $W=10$, and $141.7$ for SR-RAW with $W=10$. Since these thresholds live on method-specific scales, they are not comparable across methods. No KU data are used to choose $W$, $L$, or $\tau$. Persist-3 uses the raw split-conformal $p$-values; it does not use the DKW correction or the betting functions of Appendix~\ref{app:auxiliary}. Since the thresholds of different methods live on method-specific scales, we report realized alarm counts in Table~\ref{tab:matched-main} and do not call the methods exactly FPR-matched.

\section{Historical Known-to-Known Traces}
\label{app:kk-cases}

Figure~\ref{fig:kk-cases} shows traces of the family statistic $C_t^{\mathrm{fam}}$ on KK streams under a historical protocol, in which the threshold is set per configuration at the 99th percentile ($Q_{99}$) of validation values. These panels are retained only as finite-window recovery illustrations and are not frozen-protocol Persist-3 case studies. The frozen-protocol evidence for finite forgetting is the path-identity check and the alarm-timing result of Appendix~\ref{app:kk-timing}, not this figure.

\begin{figure}[H]
\centering
\includegraphics[width=\linewidth]{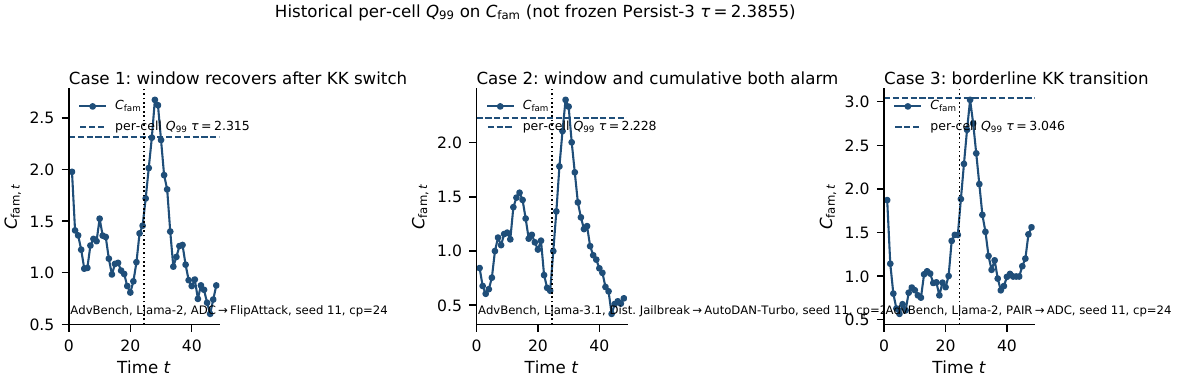}
\caption{Traces of the family statistic $C_t^{\mathrm{fam}}$ on KK streams under a historical per-configuration $Q_{99}$ threshold protocol. The X-axis is the stream position and the Y-axis is the family statistic. These are historical illustrations: the thresholds shown are per-configuration $Q_{99}$ values, not the frozen $\widehat\tau$ of Appendix~\ref{app:algorithm}, and the panels are not Persist-3 case studies under the frozen protocol.}
\label{fig:kk-cases}
\end{figure}

\section{DKW-Adjusted Product and SR Detectors}
\label{app:auxiliary}

This section defines the DKW-adjusted quantities used by SR-DKW and gives a validity result for a related product statistic. The result is a sanity check only: it does not apply to Persist-3, SR-DKW, or SR-RAW, and it gives no control under KK switches.

Let
\[
\varepsilon_k(\beta)
=
\sqrt{\frac{\log(2K/\beta)}{2n_k}},
\;
\widetilde p_k(x)=\min\{1,p_k(x)+\varepsilon_k(\beta)\},
\;
\widetilde p_t^{\mathrm{fam}}
=
\max_{k\in[K]}\widetilde p_k(X_t),
\]
with $\beta=0.05$, and let $g_\lambda(p)=1+\lambda(1-2p)$ for $\lambda\in[0,1]$. SR-DKW updates $R_{t,\lambda}=(1+R_{t-1,\lambda})\,g_\lambda(\widetilde p_t^{\mathrm{fam}})$ with $R_{0,\lambda}=0$, and SR-RAW applies the same recursion to $p_t^{\mathrm{fam}}=\max_k p_k(X_t)$. Both are empirical comparators; we do not prove guarantees for them.

For the product statistic, fix a grid $\Lambda=\{\lambda_m\}_{m=1}^{M}\subset[0,1]$ and weights $w_m\ge0$ with $\sum_m w_m=1$ before the stream starts, and define
\[
E_t^{(m)}
=
\prod_{s=1}^{t}
g_{\lambda_m}(\widetilde p_s^{\mathrm{fam}}),
\;
E_t^{\mathrm{mix}}
=
\sum_m w_m E_t^{(m)}.
\]
Here the maximum over sources is taken at each step before the product over time.

\begin{lemma}[DKW-adjusted inclusive upper-tail validity]
\label{lem:conditional}
Use $F_k,\widehat F_k,\varepsilon_k$ from Lemma~\ref{lem:cal-tail}, and let $\mathcal G_k$ be the $k$th event in the intersection defining $\mathcal G$. Define $p_k^{\rm score}(s)=(1+\#\{i:R_{k,i}\ge s\})/(n_k+1)$ and $\widetilde p_k^{\rm score}(s)=\min\{1,p_k^{\rm score}(s)+\varepsilon_k\}$. On $\mathcal G_k$,
\[
\widetilde p_k^{\rm score}(s)\ge1-F_k(s^-),\; s\in\mathbb R.
\]
Consequently, for every fixed $\mathcal O$ in $\mathcal G_k$ and fresh $X\sim P_k$ independent of $\mathcal O$, $\widetilde p_k(X)=\widetilde p_k^{\rm score}(S_k(X))$ is super-uniform conditional on $\mathcal O$. No continuity assumption is required.
\end{lemma}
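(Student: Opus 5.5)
The plan is to prove the deterministic inequality first and then convert it to super-uniformity through the standard "generalized inverse" argument.

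\textbf{Step 1 (deterministic bound).} Fix $s\in\mathbb R$. Write $\widehat F_k(s^-)=n_k^{-1}\#\{i:R_{k,i}<s\}$. Then
\[
p_k^{\rm score}(s)=\frac{1+n_k(1-\widehat F_k(s^-))}{n_k+1}\ge 1-\widehat F_k(s^-),
\]
because $(1+n_k a)/(n_k+1)\ge a$ for $a\in[0,1]$. On $\mathcal G_k$ we have $\sup_s|\widehat F_k(s)-F_k(s)|\le\varepsilon_k$. Taking left limits along $s'\uparrow s$ gives $|\widehat F_k(s^-)-F_k(s^-)|\le\varepsilon_k$. Hence
\[
p_k^{\rm score}(s)+\varepsilon_k\ge 1-F_k(s^-).
\]
Since $1-F_k(s^-)\le1$, truncation at $1$ preserves the inequality, which gives $\widetilde p_k^{\rm score}(s)\ge1-F_k(s^-)$.

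\textbf{Step 2 (super-uniformity).} Fix $\mathcal O\in\mathcal G_k$. The scores $S_k$ and the calibration set are then frozen. Take $X\sim P_k$ independent of $\mathcal O$ and let $S=S_k(X)$. Conditionally on $\mathcal O$, $S$ has CDF $F_k$. By Step 1,
\[
\Prob\{\widetilde p_k(X)\le u\mid\mathcal O\}\le\Prob\{1-F_k(S^-)\le u\}=\Prob\{\bar F(S)\le u\},
\qquad \bar F(s):=\Prob(S\ge s).
\]
It therefore suffices to show $\Prob\{\bar F(S)\le u\}\le u$ for every $u\in[0,1]$, with no continuity assumption.

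The function $\bar F$ is nonincreasing and left-continuous. Consider the set $\{s:\bar F(s)\le u\}$. It is an up-set of the real line, so it equals $[s_0,\infty)$ or $(s_0,\infty)$, or is empty, where $s_0=\inf\{s:\bar F(s)\le u\}$.

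In the case $(s_0,\infty)$: choose $s_n\downarrow s_0$. Then $\Prob(S>s_0)=\lim\Prob(S\ge s_n)=\lim\bar F(s_n)\le u$.

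In the case $[s_0,\infty)$: $\Prob(S\ge s_0)=\bar F(s_0)\le u$ directly.

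In either case $\Prob\{\bar F(S)\le u\}\le u$, which completes the argument.

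\textbf{Main obstacle.} The delicate point is the handling of ties and atoms. This is exactly why the inequality uses $F_k(s^-)$ rather than $F_k(s)$, and why the inclusive count $\ge$ appears in $p_k^{\rm score}$. The care lies in confirming two things: that the left-limit transfer of the DKW band is valid, and that the up-set argument in Step 2 covers both the open and the closed endpoint cases.
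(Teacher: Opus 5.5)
Your proposal is correct and matches the paper's proof: the paper also derives $p_k^{\rm score}(s)\ge 1-\widehat F_k(s^-)\ge 1-F_k(s^-)-\varepsilon_k$ from the DKW band at left limits, observes that adding $\varepsilon_k$ and truncating at one keeps the domination, and concludes from the super-uniformity of $1-F_k(S_k(X)^-)$. The only difference is that the paper asserts this last super-uniformity fact without proof, whereas your up-set argument in Step 2 actually proves it, including when the score distribution has atoms.
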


\begin{theorem}[Fixed-source product crossing control]
\label{thm:eprocess}
Under Assumption~\ref{ass:monitoring}, let $\mathcal G$ be the event in Lemma~\ref{lem:cal-tail}. Suppose the entire stream is i.i.d.\ from a fixed known source $P_j$. For every offline realization $\mathcal O$ in $\mathcal G$, each $E_t^{(m)}$ and $E_t^{\rm mix}$, initialized at one, is a nonnegative supermartingale conditional on $\mathcal O$, with filtration $\mathcal F_t=\sigma(\mathcal O,X_1,\ldots,X_t)$. Thus, for $\delta\in(0,1)$,
\[
\Prob_{P_j}\!\left\{\sup_{t\ge0}E_t^{\rm mix}\ge1/\delta\mid\mathcal O\right\}\le\delta
\;\text{on }\mathcal G.
\]
Averaging over offline data gives
\[
\Prob_{P_j}\!\left\{\sup_{t\ge0}E_t^{\rm mix}\ge1/\delta\right\}
\le\delta+(1-\delta)\Prob(\mathcal G^c)\le\delta+\beta.
\]
For $\delta=\beta=0.05$, the bound is $0.10$; observing zero alarms does not reduce it.
\end{theorem}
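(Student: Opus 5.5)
The argument has three steps: show that each one-step factor has conditional mean at most one, conclude that each product and the mixture are supermartingales, and then apply Ville's inequality and average over the calibration event.

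\textbf{Step 1: one-step factors.} Fix $\mathcal O\in\mathcal G$. Under Assumption~\ref{ass:monitoring}, $X_t$ is independent of $\mathcal F_{t-1}$ given $\mathcal O$, with $X_t\sim P_j$. Since $\widetilde p_t^{\mathrm{fam}}\ge\widetilde p_j(X_t)$ and $g_\lambda$ is nonincreasing in $p$,
\[
\E[g_\lambda(\widetilde p_t^{\mathrm{fam}})\mid\mathcal F_{t-1}]\le\E_{P_j}[g_\lambda(\widetilde p_j(X))\mid\mathcal O]=1+\lambda\bigl(1-2\,\E[\widetilde p_j(X)\mid\mathcal O]\bigr).
\]
Lemma~\ref{lem:conditional} applies because $\mathcal G\subseteq\mathcal G_j$, so $\widetilde p_j(X)$ is super-uniform. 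Hence $\E[\widetilde p_j(X)\mid\mathcal O]=\int_0^1\Prob\{\widetilde p_j(X)>u\mid\mathcal O\}\,du\ge\int_0^1(1-u)\,du=1/2$. The factor therefore has conditional mean at most one. It is also nonnegative, since $\lambda\in[0,1]$ and $p\in[0,1]$ give $g_\lambda(p)\in[1-\lambda,1+\lambda]$.

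\textbf{Step 2: supermartingale property.} Each $E_t^{(m)}$ is $\mathcal F_t$-measurable and satisfies $\E[E_t^{(m)}\mid\mathcal F_{t-1}]=E_{t-1}^{(m)}\,\E[g_{\lambda_m}(\cdot)\mid\mathcal F_{t-1}]\le E_{t-1}^{(m)}$, with $E_0^{(m)}=1$. So each $E_t^{(m)}$ is a nonnegative supermartingale. The mixture $E_t^{\rm mix}$ is a convex combination with weights fixed before the stream starts, so it inherits the property and starts at one.

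\textbf{Step 3: Ville's inequality and averaging.} Ville's inequality for nonnegative supermartingales gives $\Prob\{\sup_tE_t^{\rm mix}\ge1/\delta\mid\mathcal O\}\le\delta\,E_0^{\rm mix}=\delta$ on $\mathcal G$. To average over offline data, write the crossing event as $A$ and condition on $\mathcal T$. Bounding the conditional probability by $\delta$ on $\mathcal G$ and by $1$ on $\mathcal G^c$ gives
\[
\Prob(A)\le\delta\,\Prob(\mathcal G)+\Prob(\mathcal G^c)=\delta+(1-\delta)\Prob(\mathcal G^c)\le\delta+\beta.
\]
The last step uses $\Prob(\mathcal G^c\mid\mathcal T)\le\beta$ from Lemma~\ref{lem:cal-tail}, and taking expectations over $\mathcal T$ preserves this bound.

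The only delicate point is Step~1's conditioning. We need $X_t$ to be independent of both $\mathcal O$ and the past given $\mathcal O$, so that the super-uniformity established for a fresh draw transfers to every step, even though all steps reuse the same calibration set. Assumption~\ref{ass:monitoring} supplies exactly this independence. Everything else is routine.
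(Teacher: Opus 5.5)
Your proposal is correct and follows the same route as the paper. Conditional super-uniformity of $\widetilde p_j(X_t)$ from Lemma~\ref{lem:conditional}, together with $\widetilde p_t^{\mathrm{fam}}\ge\widetilde p_j(X_t)$ and the monotonicity of $g_\lambda$, bounds each factor's conditional mean by one; the products and their fixed-weight mixture are then nonnegative supermartingales, Ville's inequality gives the conditional bound on $\mathcal G$, and bounding the probability by one on $\mathcal G^c$ yields $\delta+(1-\delta)\Prob(\mathcal G^c)\le\delta+\beta$. The only cosmetic difference is that you use linearity of $g_\lambda$ to reduce to $\E[\widetilde p_j(X)\mid\mathcal O]\ge 1/2$, whereas the paper compares directly with $\int_0^1 g_\lambda(u)\,du=1$.
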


Theorem~\ref{thm:eprocess} holds conditional on the calibration event $\mathcal G$ and does not make $E_t^{\rm mix}$ an unconditional e-process. Empirically, at $\alpha=0.05$, the same product with raw $p$-values alarms on about $0.184$ of stationary streams, and for about $11.9\%$ of (configuration, $\lambda$) pairs the empirical mean of $g_\lambda$ exceeds $1.01$. The DKW-adjusted product raises no alarm on stationary streams and has an empirical mean of $g_\lambda$ of about $0.781$, so it is very conservative.

\section{Proofs}
\label{app:proofs}

\subsection{Proof of Proposition~\ref{prop:source_validity}}
\begin{proof}
Condition on all training information $\mathcal T$, including the pooled data used to fit the shared PCA map. The $n_k$ calibration scores and one fresh score are exchangeable. Rank the $n_k+1$ scores in descending order, i.e., assign rank 1 to the largest score, and break ties uniformly at random. Then the rank $R$ of the fresh score is uniform on $\{1,\ldots,n_k+1\}$. The inclusive upper-tail conformal value satisfies $p_k(X)\ge R/(n_k+1)$. Therefore
\[
\Prob\{p_k(X)\le u\mid\mathcal T\}\le\frac{\lfloor(n_k+1)u\rfloor}{n_k+1}\le u.
\]
With no ties, equality with the normalized rank gives the discrete-uniform statement. Averaging over $\mathcal T$ gives marginal validity; conditioning additionally on the realized calibration set need not preserve uniformity.
\end{proof}

\subsection{Proof of Lemma~\ref{lem:cal-tail}}
\begin{proof}
Fix $\mathcal T$. Assumption~\ref{ass:monitoring} makes the scores within each calibration sample i.i.d.\ with CDF $F_k$. The DKW inequality gives
\[
\Prob\!\left\{\sup_s|\widehat F_k(s)-F_k(s)|>\varepsilon_k\mid\mathcal T\right\}
\le2e^{-2n_k\varepsilon_k^2}=\beta/K.
\]
A union bound gives $\Prob(\mathcal G\mid\mathcal T)\ge1-\beta$, without requiring independence across sources. Fix an offline realization in $\mathcal G$ and write
\[
\widehat{\overline F}_k(s)=1-\widehat F_k(s^-),\;
\overline F_k(s)=1-F_k(s^-),\;
p_k^{\rm score}(s)=\frac{1+n_k\widehat{\overline F}_k(s)}{n_k+1}.
\]
The DKW bound also holds at left limits, and
\[
p_k^{\rm score}(s)\ge\widehat{\overline F}_k(s)\ge\overline F_k(s)-\varepsilon_k.
\]
Thus $p_k(X)\le u$ implies $\overline F_k(S_k(X))\le u+\varepsilon_k$. For a fresh draw from $P_k$, $\overline F_k(S_k(X))$ is super-uniform, including when scores have atoms. Independence of the fresh draw from $\mathcal O$ therefore gives the conditional tail bound. Since $0\le Z_k\le b_k$, integrating this tail yields
\[
\E[Z_k(X)\mid\mathcal O]=\int_0^{b_k}\Prob\{Z_k(X)>z\mid\mathcal O\}\,dz
\le\int_0^{b_k}\min\{1,e^{-z}+\varepsilon_k\}\,dz.
\]
\end{proof}

\subsection{Proof of Theorem~\ref{thm:persist-null}}
\begin{proof}
Suppose the true source is $P_j$. Because $C_s^{\rm fam}=\min_k C_{k,s}$, a Persist-$L$ alarm at time $t$ implies
\[
C_{j,s}\ge\tau,
\; s=t-L+1,\ldots,t.
\]
Each source-$j$ statistic is an average over $W$ observations, so
\[
\sum_{s=t-L+1}^{t}W C_{j,s}\ge LW\tau.
\]
The left-hand side is a weighted sum of the $B=W+L-1$ observations entering the overlapping windows:
\[
\sum_{s=t-L+1}^{t}W C_{j,s}
=
\sum_{h=1}^{B}a_h Z_{j,h},
\]
where $a_h$ is the number of the $L$ windows containing observation $h$. Because $\sum_h a_h=LW$ and $\E[Z_{j,h}\mid\mathcal O]=\mu_{0,j}$, the alarm event implies
\[
\sum_{h=1}^{B}a_h(Z_{j,h}-\mu_{0,j})
\ge
LW(\tau-\mu_{0,j}).
\]
Conditional on $\mathcal O$, the $Z_{j,h}$ are independent and lie in $[0,b_j]$. The weighted Hoeffding inequality therefore yields
\[
\Prob\{D_t=1\mid\mathcal O\}
\le
\exp\left\{
-\frac{2L^2W^2(\tau-\mu_{0,j})^2}
{b_j^2\sum_h a_h^2}
\right\},
\]
which is Eq.~\eqref{eq:persist-null-bound}. For $W\ge L$, the overlap counts are
\[
1,2,\ldots,L-1,
\underbrace{L,\ldots,L}_{W-L+1\text{ terms}},
L-1,\ldots,2,1,
\]
which gives Eq.~\eqref{eq:awl}.
\end{proof}

\subsection{Proof of Corollary~\ref{cor:dkw-null}}
\begin{proof}
On the event of Lemma~\ref{lem:cal-tail}, $\mu_{0,j}=\E_{P_j}[Z_j(X)\mid\mathcal O]\le\bar\mu_j$. If $\tau>\bar\mu_j$, then $\tau-\mu_{0,j}\ge\tau-\bar\mu_j>0$. Substituting this smaller gap into Theorem~\ref{thm:persist-null} gives the bound.
\end{proof}

\subsection{Proof of Corollary~\ref{cor:finite-horizon}}
\begin{proof}
For each $t\ge L$, an alarm implies $C_{j,s}\ge\tau$ at all $L$ required times. Summing the window averages, including their startup denominators, gives
\[
\sum_{s=t-L+1}^t C_{j,s}=\sum_{r=1}^t v_{t,r}Z_j(X_r)\ge L\tau,
\; \sum_{r=1}^t v_{t,r}=L.
\]
Conditional on $\mathcal O$, the summands depend on independent observations and lie in intervals of length $v_{t,r}b_j$. The weighted Hoeffding inequality gives $\rho_{j,t}$ when the mean gap is positive; otherwise the probability is at most one. A union bound over $\{D_t=1\}$ for $L\le t\le H$ gives the bound without independence across overlapping windows. For full windows, the nonzero weights are $a_h/W$, so $V_t=A_{W,L}/W^2$. On $\mathcal G$, replace the mean by its upper bound as in Corollary~\ref{cor:dkw-null}. If the resulting sum is at most $\alpha$ on every offline realization in $\mathcal G$, integrate over $\mathcal O$ and add $\Prob(\mathcal G^c)\le\beta$.
\end{proof}

\subsection{Proof of Proposition~\ref{prop:kk-recovery}}
\begin{proof}
At time $t\ge B$, the earliest observation entering any of the $L$ windows is $X_{t-B+1}$. All other inputs to the decision are frozen offline objects. Thus, $D_t$ is a function of $X_{t-B+1},\ldots,X_t$, and two streams sharing this block have the same decision. If $t\ge\nu+B-1$, this block excludes every observation before $\nu$. If the block also lies in a known-source segment from $P_j$, its observations are i.i.d.\ conditional on $\mathcal O$ by Assumption~\ref{ass:monitoring}. Then Theorem~\ref{thm:persist-null} applies when $\tau>\mu_{0,j}$. These statements concern the continued statistic, not the action taken after the first alarm.
\end{proof}

\subsection{Proof of Proposition~\ref{prop:threshold-relax}}
\begin{proof}
A Persist-$L$ alarm in a stream requires $L$ consecutive threshold crossings, which contain at least one crossing. Therefore $A_e^{(L)}(\tau)\le A_e^{(1)}(\tau)$ for every stream $e$ and threshold $\tau$, where $A_e^{(L)}(\tau)$ indicates a Persist-$L$ alarm. Averaging over any fixed validation collection gives $\widehat r_L(\tau)\le\widehat r_1(\tau)$ for the empirical alarm rates. Hence every threshold feasible for single crossing under given caps is also feasible for Persist-$L$.
\end{proof}

\subsection{Proof of Theorem~\ref{thm:persist-power}}
\begin{proof}
For each source $k$ and each full post-change window,
\[
\E_Q Z_k(X)=\tau+\Delta_k(Q),
\;
0\le Z_k(X)\le b_k.
\]
Hoeffding's lower-tail inequality gives
\[
\Prob_Q\{C_{k,s}<\tau\mid\mathcal O\}
\le
\exp\left(-\frac{2W\Delta_k^2(Q)}{b_k^2}\right).
\]
The event $C_s^{\rm fam}\ge\tau$ fails only if at least one of the $K$ source statistics is below $\tau$, so a union bound over $k$ gives
\[
\Prob_Q\{C_s^{\rm fam}<\tau\mid\mathcal O\}
\le
\sum_{k=1}^{K}
\exp\left(-\frac{2W\Delta_k^2(Q)}{b_k^2}\right).
\]
Persist-$L$ fails at time $t$ if at least one of its $L$ required crossings fails. A second union bound over these $L$ times gives Eq.~\eqref{eq:persist-power}. No independence among the overlapping windows is required.
\end{proof}

\subsection{Proof of Corollary~\ref{cor:nearest-source}}
\begin{proof}
Because $\Delta_k(Q)\ge\Delta_*(Q)$ and $b_k\le b_{\max}$ for every $k$,
\[
\exp\left(-\frac{2W\Delta_k^2(Q)}{b_k^2}\right)
\le
\exp\left(-\frac{2W\Delta_*^2(Q)}{b_{\max}^2}\right).
\]
Substituting this bound into Theorem~\ref{thm:persist-power} yields Eq.~\eqref{eq:nearest-source-power}.
\end{proof}

\subsection{Proof of Corollary~\ref{cor:delay}}
\begin{proof}
Fix $\mathcal O$ and consider $t_m=\nu+mB-1$ for integers $m\ge1$ with $mB\le H$. The decision $D_{t_m}$ uses a block contained in $[\nu+(m-1)B,\nu+mB-1]$. These blocks are disjoint and independent under the i.i.d.\ segment assumption, and each failure probability is at most $q$ by Theorem~\ref{thm:persist-power}. If $T_{\nu,H}>mB$, the first $m$ block decisions must all fail, so
\[
\Prob_Q\{T_{\nu,H}>mB\mid\mathcal O\}\le\prod_{\ell=1}^m\Prob_Q\{D_{t_\ell}=0\mid\mathcal O\}\le q^m.
\]
The case $m=0$ is trivial. For finite $H$, no eligible alarm implies failure at the $\lfloor H/B\rfloor$ block endpoints, giving the miss bound. For $H=\infty$, summing the delay tail in groups of $B$ terms gives
\[
\E_Q[T_{\nu,\infty}\mid\mathcal O]
=\sum_{d=0}^{\infty}\Prob_Q\{T_{\nu,\infty}>d\mid\mathcal O\}
\le B\sum_{m=0}^{\infty}q^m=\frac{B}{1-q}.
\]
Only fully post-change decisions enter, so mixed windows need not be controlled.
\end{proof}

\subsection{Proof of Proposition~\ref{prop:identifiability}}
\begin{proof}
Fix $\mathcal O$ and the common initial state. By independence across observations, the law of the score process is the product of its one-observation laws. Equality of these laws under $Q$ and $P_j$ therefore gives equality of the entire score-process laws, and hence of any measurable alarm-time law. The same argument applies conditional on a common history when both future streams are independent of it. For dependent streams, equality of one-observation laws alone would not suffice.
\end{proof}

\subsection{Proof of Lemma~\ref{lem:conditional}}
\begin{proof}
For a fixed offline realization in $\mathcal G_k$, the argument in Lemma~\ref{lem:cal-tail} gives
\[
p_k^{\rm score}(s)\ge1-\widehat F_k(s^-)\ge1-F_k(s^-)-\varepsilon_k.
\]
Adding $\varepsilon_k$ and truncating at one preserves domination of $1-F_k(s^-)\in[0,1]$. Since $1-F_k(S_k(X)^-)$ is super-uniform for a fresh draw, so is the inflated value, including for score distributions with atoms.
\end{proof}

\subsection{Proof of Theorem~\ref{thm:eprocess}}
\begin{proof}
Fix $\mathcal O$ in $\mathcal G$. The fresh observation $X_t\sim P_j$ is independent of the past conditional on $\mathcal O$. Lemma~\ref{lem:conditional} gives conditional super-uniformity of $\widetilde p_j(X_t)$, and $\widetilde p_t^{\rm fam}\ge\widetilde p_j(X_t)$. Because $g_\lambda$ is non-increasing for $0\le\lambda\le1$,
\[
\E[g_\lambda(\widetilde p_t^{\rm fam})\mid\mathcal F_{t-1}]
\le\E[g_\lambda(\widetilde p_j(X_t))\mid\mathcal F_{t-1}]
\le\int_0^1[1+\lambda(1-2u)]\,du=1.
\]
Each factor is nonnegative, so each product starting at one is a nonnegative supermartingale, and so is their convex combination with fixed weights. Ville's inequality gives the conditional crossing bound. Integrating over $\mathcal G$ and using probability at most one on $\mathcal G^c$ gives $\delta+(1-\delta)\Prob(\mathcal G^c)\le\delta+\beta$.
\end{proof}

\subsection{Proof of Corollary~\ref{cor:post_enrollment_mdcp}}
\begin{proof}
Fix an enrollment history $\mathcal H$ satisfying the assumptions and a true source $j$. Conditional exchangeability of its calibration and future labeled scores gives $\Prob_j\{\pi_j(X,Y)\le\alpha\mid\mathcal H\}\le\alpha$ by the rank argument in Proposition~\ref{prop:source_validity}. Since $\max_k\pi_k(X,Y)\ge\pi_j(X,Y)$, exclusion from $\Gamma_\alpha(X)$ implies $\pi_j(X,Y)\le\alpha$. This proves coverage for each source without independence across calibration blocks. If the enrollment decision depended on an old calibration set, exchangeability for that set cannot be inferred from its original marginal validity, and a fresh block for the new source does not repair this.
\end{proof}

\section{Additional Operating Points}
\label{app:latest-operating-audits}

\subsection{Short-horizon comparison}
Before the main comparison in Section~\ref{sec:matched-regimes}, we compared methods on short streams of 40 observations per segment, with looser caps: validation stationary FPR $\le0.05$ and validation KK alarm rate $\le0.05$. For each method, the most aggressive threshold satisfying both caps is chosen on known-source validation streams only.

We write OUT-WIN for the aggregation used by Persist-3: average $-\log p_k$ of each source over a window of $W=10$ observations, and then take the minimum over sources, which gives $C_t^{\mathrm{fam}}$ in \eqref{eq:family-stat}. The methods in Table~\ref{tab:short-horizon} are as follows.
\begin{itemize}
\item \emph{Persist-3 OUT-WIN-$W10$} is Persist-3 (Algorithm~\ref{alg:detectnew}).
\item \emph{Single-crossing OUT-WIN-$W10$} uses the same $C_t^{\mathrm{fam}}$ but alarms at the first crossing, i.e., $L=1$.
\item \emph{Mean-$p$ conformal window} averages the source-wise conformal $p$-values, $p_t^{\mathrm{mean}}=K^{-1}\sum_{k}p_{k,t}$, and uses the mean of $-\log p_t^{\mathrm{mean}}$ over the last $W=10$ observations. It does not take the minimum over sources.
\item \emph{Family min-kNN window} works in the PCA space without conformal $p$-values. For each source $k$, it uses the average distance to the $q$ nearest calibration points of that source, $S_k^{\mathrm{knn}}(x)=q^{-1}\sum_{j=1}^{q}\|\Pi(x)-\Pi(x^{(k)}_{(j)})\|_2$, where $x^{(k)}_{(j)}$ is the $j$th nearest one and $q\in\{5,10,20\}$ is chosen once per configuration by the area under the ROC curve (AUROC) on known-source validation data. It then averages $S_k^{\mathrm{knn}}$ over the last $W=10$ observations and takes the minimum over sources.
\end{itemize}

\begin{table}[H]
\centering
\caption{Short-horizon comparison on the 330 configurations. ``Val'' and ``Test'' denote validation and test streams; FPR is the stationary alarm rate, KK the alarm rate on streams with a known-to-known switch, and KU the detection rate on streams with a known-to-unknown switch. Mean delay is over detected streams.}
\label{tab:short-horizon}
\footnotesize
\begin{tabular}{lcccccc}
\toprule
Method & Val FPR & Val KK & Test FPR & Test KK & Test KU & Mean delay\\
\midrule
Persist-3 OUT-WIN-$W10$ & 0.0050 & 0.0495 & 0.0000 & 0.0349 & 0.4663 & 8.94\\
Single-crossing OUT-WIN-$W10$ & 0.0466 & 0.0439 & 0.0256 & 0.0494 & 0.3755 & 7.45\\
SR-RAW & 0.0500 & 0.0017 & 0.0207 & 0.0477 & 0.7990 & 13.18\\
SR-DKW & 0.0491 & 0.0027 & 0.0134 & 0.0215 & 0.6926 & 12.19\\
Mean-$p$ conformal window & 0.0466 & 0.0001 & 0.0313 & 0.0001 & 0.2148 & 12.36\\
Family min-kNN window & 0.0027 & 0.0481 & 0.0046 & 0.0463 & 0.1569 & 7.07\\
\bottomrule
\end{tabular}
\end{table}

We also evaluated two variants that use the Mahalanobis score without conformal calibration. \emph{Family min-Mahalanobis window} uses $S_{k,t}^{M}=(Z_t-\widehat\mu_k)^\top\widehat\Sigma_k^{+}(Z_t-\widehat\mu_k)$ with $Z_t=\Pi(X_t)$, averages it over the last $m_t$ observations, and takes the minimum over sources, $C_t^{M}=\min_k m_t^{-1}\sum_{u=t-m_t+1}^{t}S_{k,u}^{M}$; \emph{Persist-3 on raw Mahalanobis window} requires three consecutive crossings of $C_t^{M}$. Neither raw-Mahalanobis variant had a feasible operating point within the prespecified threshold grid under the known-only validation caps, so neither has an operating point to report. The grids were $n=1169$ candidates with $\tau\in[3.00,413.41]$ for the min-Mahalanobis window and $n=1174$ candidates with $\tau\in[3.00,166.52]$ for Persist-3 on the raw Mahalanobis window. This is a statement about the grid, not an impossibility: on a finite validation set, a threshold above every observed statistic never alarms and would trivially satisfy both caps, but such thresholds lie outside the prespecified grid.

CUSUM satisfies the caps on validation streams but has a test KK alarm rate of $0.991$, because short validation streams do not reproduce the state that CUSUM accumulates over long KK streams. After matching validation and test lengths, CUSUM has KU detection $0$ (Appendix~\ref{app:followup-abd}), so we do not use it as a comparator.

\subsection{Window length}
\begin{table}[H]
\centering
\caption{Persist-3 with different window lengths $W$, under the short-horizon calibration of the previous subsection.}
\footnotesize
\begin{tabular}{rccccc}
\toprule
$W$ & $\tau$ & Test FPR & Test KK & Test KU & Mean delay\\
\midrule
3 & 2.2627 & 0.0227 & 0.0209 & 0.5228 & 11.80\\
5 & 2.1263 & 0.0191 & 0.0593 & 0.5723 & 9.52\\
10 & 2.3861 & 0.0000 & 0.0349 & 0.4663 & 8.94\\
\bottomrule
\end{tabular}
\end{table}
Neither $W=3$ nor $W=5$ is better than $W=10$ in all of FPR, KK alarm rate, and KU detection. We keep $W=10$, which was fixed in advance for its low stationary alarm rate, and do not select $W$ by KU results. Detectability also varies across victim models and attack sources (Figure~\ref{fig:sourcewise-ku-latest}). We report these differences and do not select models.

\begin{figure}[H]
\centering
\adjustimage{trim=0 0 0 {0.10\height},clip,width=0.8\linewidth}{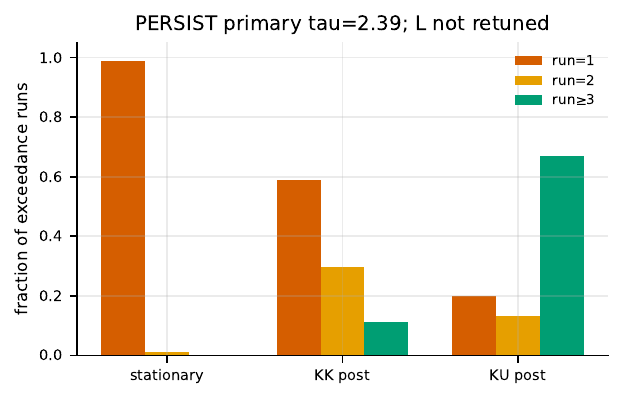}
\caption{Run-length distribution of threshold exceedances of the pre-persistence family statistic $C_t^{\mathrm{fam}}$, on the original 330-configuration split for Persist-3 with $W=10$, $L=3$ (not retuned) and $\widehat\tau=2.386$. A run is a maximal block of consecutive positions with $C_t^{\mathrm{fam}}\ge\widehat\tau$. The X-axis separates stationary streams, post-switch KK segments and post-switch KU segments; the Y-axis is the fraction of exceedance runs of length 1, 2, or at least 3. Runs are taken over full stationary streams and over the post-switch part of KK and KU streams. Fractions are normalized over exceedance runs, not over streams, so a stream with no exceedance contributes no run: the counts are 79 stationary runs, 4471 post-switch KK runs and 1622 post-switch KU runs.}
\label{fig:persist3-run-lengths}
\end{figure}

Figure~\ref{fig:persist3-run-lengths} is the empirical counterpart of the persistence rule: Persist-3 fires exactly when such a run reaches length $L=3$, so the mass at length $\ge3$ under known sources is what the threshold has to suppress, and the mass at length $\ge3$ on post-switch KU segments is what it has to keep.

\subsection{Persistence length}
\label{app:persist-length}
This subsection examines how the persistence length $L$ affects the detector. We rerun all 330 original-split configurations under the short-horizon protocol, with $W=10$ and the PCA dimension $32$ fixed. For each $L\in\{1,2,3,4\}$, the threshold $\tau$ is recalibrated by the known-only rule of Section~\ref{sec:calibration}, under the same caps as Persist-3 ($0.031313$ for stationary alarms and $0.051678$ for KK alarms). No unseen-source data are used. Rates are source-macro averages with 95\% intervals from a bootstrap over held-out sources (5000 replicates, seed 11).

\begin{table}[H]
\centering
\caption{Detectors with different persistence lengths $L$ (330 configurations, short-horizon protocol, $W=10$). $\tau$ is recalibrated for each $L$ on known-source validation streams. $L=3$ is the primary choice, fixed in advance. Delay is the mean delay over detected streams.}
\label{tab:persist-length}
\footnotesize
\begin{tabular}{rccccc}
\toprule
$L$ & $\tau$ & FPR $\downarrow$ & KK $\downarrow$ & KU $\uparrow$ & Delay \\
\midrule
1 & 2.997 & $0.016\,[0.014,0.018]$ & $0.000\,[0.000,0.001]$ & $0.207\,[0.139,0.301]$ & 7.78 \\
2 & 2.541 & $0.000\,[0.000,0.000]$ & $0.035\,[0.024,0.040]$ & $0.428\,[0.319,0.579]$ & 8.62 \\
3 (primary) & 2.386 & $0.000\,[0.000,0.000]$ & $0.035\,[0.024,0.040]$ & $0.467\,[0.352,0.623]$ & 8.94 \\
4 & 2.234 & $0.000\,[0.000,0.000]$ & $0.039\,[0.027,0.045]$ & $0.521\,[0.384,0.693]$ & 9.99 \\
\bottomrule
\end{tabular}
\end{table}

Table~\ref{tab:persist-length} gives three observations. First, persistence is necessary. With $L=1$, even the high recalibrated threshold leaves stationary false alarms ($0.016$), and it gives a low KU detection rate ($0.207$). Second, for $L\ge2$, the stationary FPR is zero and the KK alarm rate stays between $0.035$ and $0.039$. Third, KU detection increases with $L$ ($0.428\to0.467\to0.521$), while the mean delay also grows ($8.62\to8.94\to9.99$). Thus, $L$ trades detection against delay. Note that $L=3$ is not claimed to be optimal. It was fixed in advance by the known-only pilot, and we do not reselect it using KU results. Finally, the recalibrated $L=1$ operating point differs from the single-crossing row of Table~\ref{tab:short-horizon}, which uses a different threshold rule, so the two should not be compared directly.

\subsection{PCA dimension}
\label{app:pca-dim}

This subsection examines how the PCA dimension $d$ of the shared map $\Pi$ affects Persist-3. We rerun all 330 original-split configurations from the cached embeddings under the short-horizon protocol, with $W=10$ and $L=3$ fixed. For each $d$, the threshold $\tau$ is recalibrated by the same known-only rule of Section~\ref{sec:calibration}, and no unseen-source data are used. Rates are source-macro averages with 95\% intervals, as in Table~\ref{tab:pseudo-confirmatory-app}.

\begin{table}[H]
\centering
\caption{Persist-3 with different PCA dimensions $d$ (330 configurations, short-horizon protocol). $\tau$ is recalibrated for each $d$ on known-source validation streams. KK is the alarm rate on streams with a known-to-known switch, KU the detection rate on streams with a known-to-unknown switch, and Delay the mean delay over detected streams.}
\label{tab:pca-dim}
\footnotesize
\begin{tabular}{rcccc}
\toprule
$d$ & $\tau$ & KK $\downarrow$ & KU $\uparrow$ & Delay \\
\midrule
8  & 2.380 & $0.029\,[0.022,0.034]$ & $0.353\,[0.251,0.522]$ & 8.32 \\
16 & 2.386 & $0.040\,[0.027,0.047]$ & $0.417\,[0.316,0.563]$ & 9.28 \\
32 (primary) & 2.386 & $0.035\,[0.024,0.040]$ & $0.467\,[0.352,0.623]$ & 8.28 \\
64 & 2.392 & $0.036\,[0.024,0.042]$ & $0.465\,[0.335,0.611]$ & 8.30 \\
\bottomrule
\end{tabular}
\end{table}

Table~\ref{tab:pca-dim} gives two observations. First, the KK alarm rate is insensitive to $d$, staying between $0.029$ and $0.040$. Second, KU detection increases from $d=8$ to $d=32$ ($0.353\to0.417\to0.467$) and then levels off ($0.465$ at $d=64$). Thus, a very low dimension discards information that separates the unseen source from the known ones, while dimensions beyond 32 bring no further gain. Note that the intervals overlap across dimensions, so these differences are descriptive. The primary choice $d=32$ was fixed before this ablation and was not selected by KU results.

\subsection{Robustness resplit}
\begin{table}[H]
\centering
\caption{Original split and robustness resplit (Appendix~\ref{app:maintext-deferred}) under the short-horizon protocol, reported separately and never pooled. $N$ denotes the number of streams. Delay is omitted because the robustness resplit did not store it in the same form.}
\label{tab:pseudo-confirmatory-app}
\footnotesize
\begin{tabular}{lcccccc}
\toprule
Evaluation & FPR & KK & KU & $N_{\rm stat}$ & $N_{\rm KK}$ & $N_{\rm KU}$\\
\midrule
Original split & $0/1835$ & 0.035 & 0.467 & 1835 & 10640 & 1835\\
Robustness resplit & $10/2425$ & $0.044\,[0.033,0.048]$ & $0.414\,[0.282,0.589]$ & 2425 & 17040 & 2425\\
\bottomrule
\end{tabular}
\end{table}
The 95\% Clopper--Pearson upper bound for the robustness-resplit stationary alarm rate is $0.0070$.

\subsection{Lower alarm budgets, memory, and repeated switches}
\label{app:followup-abd}

We also ran a follow-up protocol on the same 330 configurations, fixed before KU results were summarized. It uses a different construction of KK validation streams and does not replace the comparisons above.

\paragraph{Lower alarm budgets.}
In this follow-up protocol, the validation cap on stationary alarms is $0.005$, with a KK cap of $0.05$. We also consider stationary caps in $\{0.001,0.01,0.025,0.05\}$. At $0.005$, SR-DKW stays within the cap on test streams and has KU detection $0.6295$, versus $0.4516$ for Persist-3; SR-RAW reaches $0.7857$ but has test FPR $0.0087$, above the cap. At $0.001$, neither Persist-3 nor SR-DKW raises a stationary alarm, while SR-DKW has a lower KK alarm rate ($0.0018$ vs.\ $0.0036$) and higher KU detection ($0.580$ vs.\ $0.320$).

\paragraph{Order of aggregation and memory.}
With the same $p$-values, $W$, $L$, and calibration, OUT-WIN (average over the window, then minimum over sources, as in Persist-3) gives KU detection $0.452$, versus $0.400$ for IN-WIN, which takes the minimum over sources at each step and then averages over the window. Within the SR family, full SR-RAW has KU detection $0.786$, versus $0.630$ when truncated to $W=10$, so truncation reduces power.

\paragraph{Dependence on observations before a switch.}
The paired-stream check of finite forgetting under this protocol is reported in Appendix~\ref{app:kk-timing}. In addition, with thresholds fixed at the $0.005$ cap, the stationary alarm rates at horizons $H=40,200,1000$ are $(0.0005,0.003,0.010)$ for Persist-3 and $(0,0.035,0.148)$ for SR-DKW. Horizons longer than the test pool are generated by sampling with replacement.

\paragraph{Repeated switches.}
When the stream switches among known sources every $d$ steps, Persist-3 raises an alarm on a fraction $0.61$ of streams at $d=5$, $0.21$ at $d=12$, and $0.027$ at $d=40$. When the dwell time is smaller than $B$, no decision block may lie entirely in one segment. Thus, the guarantees for homogeneous blocks do not cover these decisions. Note that short segments alone do not contradict the piecewise i.i.d.\ model. The relevant restriction is whether all observations used by a decision come from one source. Thus, these results are stress tests only.

\subsection{Dwell time shorter than the decision block}
\label{app:dwell-time}

This subsection examines how Persist-3 behaves when the attacker switches faster than its decision block of $B=W+L-1=12$ observations. We call the number of consecutive prompts from one source the \emph{dwell time} $D$. We use the cached streams of the 50 configurations in Appendix~\ref{app:defense}, with a shared horizon of $T=328$. The detector is frozen, with $W=10$, $L=3$, and $\widehat\tau=2.386$, and nothing is retuned. We consider two regimes. In the unknown$\to$unknown regime, consecutive segments of length $D$ all carry unseen-source behavior. In the known$\to$unknown reset regime, unseen-source segments alternate with known-source segments, each of length $D$. \emph{Segment detection} is the fraction of unseen-source segments with an alarm, and \emph{episode detection} is the fraction of streams with at least one alarm on an unseen-source segment.

\begin{table}[H]
\centering
\caption{Persist-3 under different dwell times $D$ with the frozen detector ($W=10$, $L=3$, $B=12$).}
\label{tab:dwell-time}
\footnotesize
\begin{tabular}{rcccc}
\toprule
& \multicolumn{2}{c}{Unknown$\to$unknown} & \multicolumn{2}{c}{Known$\to$unknown reset} \\
\cmidrule(lr){2-3}\cmidrule(lr){4-5}
$D$ & Segment detection & Episode detection & Segment detection & Episode detection \\
\midrule
3  & 0.34 & 0.75 & 0.08 & 0.61 \\
12 & 0.55 & 0.77 & 0.45 & 0.68 \\
24 & 0.57 & 0.77 & 0.46 & 0.65 \\
\bottomrule
\end{tabular}
\end{table}

Table~\ref{tab:dwell-time} gives two observations. First, under unknown$\to$unknown switching, detection degrades gracefully. Segment detection falls from $0.55$ at $D=12$ to $0.34$ at $D=3$, while episode detection stays around $0.75$. Second, under known$\to$unknown resetting, fast switching hurts much more. Segment detection falls from $0.45$ at $D=12$ to $0.08$ at $D=3$, since the known-source segments repeatedly pull the window back toward a known source. However, episode detection remains $0.61$ at $D=3$, so fast switching does not lead to complete evasion in this test. Thus, $B$ is a temporal resolution of the theorems, not a requirement for the detector to run. Note that a fully adaptive attacker, which observes the detector state and switches to destroy the evidence margin, is outside both our theorems and this stress test.

\section{Theory--Experiment Diagnostics}
\label{app:similarity}

The following diagnostics use the original split under the short-horizon protocol and are computed after the detector is frozen; they are not used for tuning. Source-level associations use the eleven held-out attack sources, with 5000 bootstrap replicates (seed 11).

\subsection{Nearest-source separation and KU detection}
For a held-out source $Q$ with $n_Q$ test observations, the estimated nearest-source separation is
$
\widehat\Delta_*(Q)
=
\min_{k\in[K]}
\{\widehat\mu_k(Q)-\tau\},
\;
\widehat\mu_k(Q)
=
\frac{1}{n_Q}\sum_{i=1}^{n_Q} -\log p_k(X_i).
$
It is strongly associated with KU detection: Pearson $r=0.98$ $[0.94,1.00]$ and Spearman $\rho=0.95$ $[0.70,1.00]$; the robustness resplit gives Pearson $r=0.981$ $[0.947,0.995]$. Thus, sources that stay close to at least one known source are hard to detect, consistent with Corollary~\ref{cor:nearest-source} and Proposition~\ref{prop:identifiability}. These are observational associations, not independent validations of the theory.

\begin{figure}[H]
\centering
\includegraphics[width=0.7\linewidth]{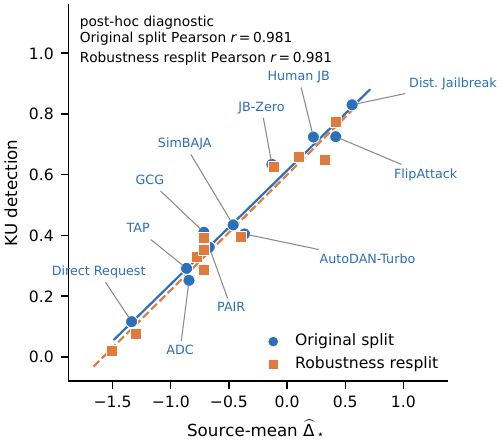}
\caption{Source-level relation between the estimated nearest-source separation $\widehat\Delta_*$ and KU detection, for the original split and the robustness resplit. The robustness resplit re-partitions the same cached embedding pool, so the two series are not independent samples of jailbreak behavior.}
\label{fig:delta-star-diagnostics}
\end{figure}

\subsection{Detection delay}
Among detected streams, the median delay of each source is positively associated with $\widehat\Delta_*$ (Pearson $r=0.84$ $[0.70,0.96]$, Spearman $\rho=0.80$; robustness resplit $r=0.866$). This association is driven by selection: sources with small $\widehat\Delta_*$ are rarely detected, and when they are, it tends to be by early chance crossings. We therefore do not treat it as a test of Corollary~\ref{cor:delay}. Since a delay summary over detected streams is a selected summary, we report the detection rate and the delay among detected streams separately. Coding missed detections as the horizon plus one mixes detection and delay, so we do not use it to validate Corollary~\ref{cor:delay}.

\subsection{Empirical checks of finite forgetting}
\label{app:kk-timing}
\label{app:finite-forgetting-check}
This subsection gives the two empirical checks of finite forgetting summarized in Section~\ref{sec:matched-regimes}. With $W=10$ and $L=3$, Proposition~\ref{prop:kk-recovery} states that the decision of Persist-3 cannot depend on any pre-switch observation after $B=12$ post-switch observations.

\paragraph{Paired-stream check.}
This check uses no threshold. Under the follow-up protocol of Appendix~\ref{app:followup-abd}, we run each detector on two streams that differ before a switch but are identical after it. In all 330 configurations, Persist-3 makes identical decisions on the two streams from $B=12$ steps after the switch. In contrast, full SR keeps a different state (median absolute difference $3.01$). Thus, the finite memory of Persist-3 holds exactly in the implementation, while the cumulative state of SR keeps reflecting pre-switch evidence.

\paragraph{Timing of KK alarms.}
Throughout this paragraph, the delay of an alarm at time $t$ is the number of post-switch observations it has seen, $d=t-\nu+1$, which the saved runs store as $t-\mathrm{cp}$. In this indexing, Proposition~\ref{prop:kk-recovery} excludes every pre-switch observation once $d\ge B=12$. Thus, pre-switch observations can influence only the decisions with $d\le11$. Any alarm at $d\ge12$ uses observations from the new known source alone, so it is an ordinary stationary alarm, not an effect of the old source. Finite forgetting removes the old source. However, it does not rule out later alarms.

The timing of first KK alarms is reported for the frozen primary detector on the original split only: Persist-3 with $W=10$, $L=3$ and $\widehat\tau=2.386$. Among the $N_{\rm KK}=10640$ KK episodes of that split, 507 have a first post-switch alarm, and all $507/507$ occur at $d\le11$.

Two points should be noted. First, the count 507 is the number of episodes with a first post-switch alarm. It is neither the source-macro KK rate nor the pooled any-alarm rate, so $507/10640\approx0.0477$ is not comparable with the KK alarm rate of $0.035$ in Table~\ref{tab:pseudo-confirmatory-app}, which is computed over a different event and aggregation. Second, the corresponding robustness-resplit timing uses the separately recalibrated $\widehat\tau=2.374$, so it belongs to a different protocol and is not reported here or shown in any frozen-protocol figure.

\begin{figure}[H]
\centering
\begin{subfigure}[t]{0.48\textwidth}
    \centering
    \includegraphics[width=0.7\linewidth]{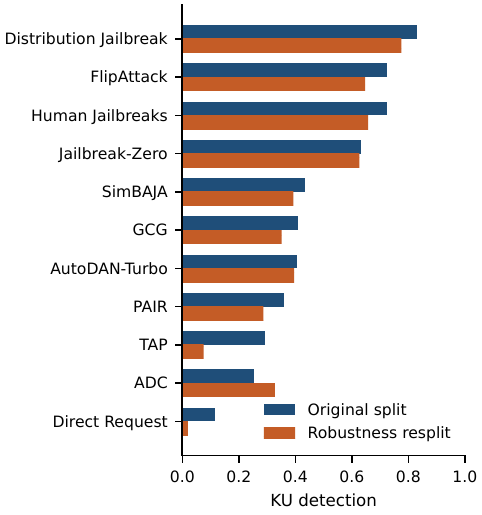}
    \caption{KU detection for each of the eleven held-out attack sources, on the original split and in the robustness resplit.}
    \label{fig:sourcewise-ku-latest}
\end{subfigure}
\hfill
\begin{subfigure}[t]{0.48\textwidth}
    \centering
    \includegraphics[width=0.7\linewidth]{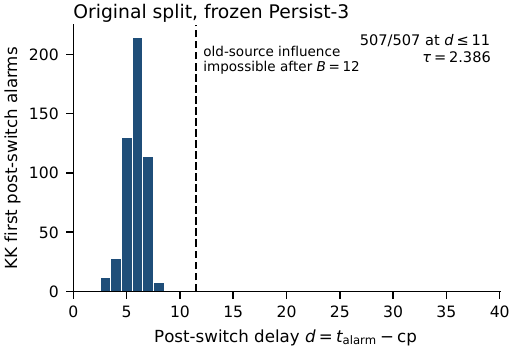}
    \caption{Timing of the first post-switch Persist-3 alarm on KK test streams of the original split, under $W=10$, $L=3$ and frozen $\widehat\tau=2.386$. The X-axis is the number of post-switch observations the alarm has seen, $d=t_{\rm alarm}-\nu+1$, stored in the saved runs as $t_{\rm alarm}-\mathrm{cp}$. Among $N_{\rm KK}=10640$ episodes, 507 have a first post-switch alarm, and all $507/507$ occur at $d\le11$. The vertical line marks $B=W+L-1=12$, the first value of $d$ at which no pre-switch observation can enter the decision.}
    \label{fig:kk-alarm-timing-app}
\end{subfigure}
\caption{Left: KU detection by held-out source. Right: frozen-protocol timing of the first post-switch KK alarm on the original split. The robustness-resplit timing uses a separately recalibrated threshold and is not shown here.}
\label{fig:robustness-forgetting-diagnostics}
\end{figure}

\begin{table}[H]
\centering
\caption{Diagnostics corresponding to the theoretical results for Persist-3.}
\label{tab:theory-empirical}
\footnotesize
\begin{tabular}{p{0.20\textwidth}p{0.25\textwidth}p{0.25\textwidth}p{0.22\textwidth}}
\toprule
Result & Diagnostic & Original split & Robustness resplit \\
\midrule
Nearest-source separation & $\widehat\Delta_*$ vs.\ KU detection & Pearson $0.98$; Spearman $0.95$ & Pearson $0.981\,[0.947,0.995]$ \\
Finite forgetting & paired streams agree after $B=12$ steps & 330/330 configurations & n/a \\
First post-switch KK alarm & timing within 11 steps, frozen $\widehat\tau$ & 507/507 of $N_{\rm KK}=10640$ & separate protocol, not reported \\
Delay & $\widehat\Delta_*$ vs.\ delay among detected & Pearson $0.84$ & Pearson $0.866$ \\
\bottomrule
\end{tabular}
\end{table}

\section{Comparison of the Two Detectors}
\label{app:method-families}

Persist-3 and SR-DKW differ in two choices. For combining sources, Persist-3 averages each source over the window and then takes the minimum over sources, while SR-DKW takes the largest DKW-adjusted $p$-value over sources at each step. For combining over time, Persist-3 uses a recent window with persistence, while SR-DKW uses a cumulative Shiryaev--Roberts recursion. The SR recursion and the DKW inequality are standard; we use them to compare cumulative and finite memory under the same calibration. Empirically, SR-DKW detects more unseen sources, while the decision of Persist-3 depends only on the most recent $B=W+L-1$ observations.

\begin{remark}[Finite memory versus cumulative SR detection]
\label{rem:sr-memory}
Persist-3 discards observations older than $B$ steps, whereas SR-DKW keeps a cumulative state unless it is truncated or reset. This structural difference is not a power advantage: under matched long-horizon calibration, SR-DKW attains higher KU detection with comparable observed alarm rates. Memory is also not the only difference between the two detectors, which additionally differ in the order of source aggregation and in the DKW adjustment. Thus, the cross-method gap should not be read as a memory effect on its own, and the within-family truncation ablation in Section~\ref{sec:matched-regimes} is the cleaner comparison. SR-DKW is preferred when sensitivity is the priority, while Persist-3 is preferred when the decision must depend only on recent observations. Rapid switching or interleaving of sources is not covered by the guarantees.
\end{remark}

\begin{table}[H]
\centering
\caption{Which detector to use.}
\footnotesize
\begin{tabular}{p{0.45\textwidth}p{0.18\textwidth}p{0.30\textwidth}}
\toprule
Requirement & Prefer & Reason \\
\midrule
Maximize KU detection & SR-DKW & Weak evidence accumulates over long histories \\
Decision depends only on recent observations & Persist-3 & Decision uses the last $B$ observations only \\
Remove the effect of the old source from the continued decision after a switch & Persist-3 & Removed after $B=W+L-1$ new observations \\
Rapid switching or interleaving of sources & Neither by default & Not covered by the guarantees \\
\bottomrule
\end{tabular}
\end{table}

\section{Data}
\label{app:data}

The eleven attack sources are ADC~\citep{hu2024efficient}, AutoDAN-Turbo~\citep{liu2025autodan}, Direct Request~\citep{mazeika2024harmbench}, Distribution Jailbreak~\citep{fu2026steering}, FlipAttack~\citep{liu2024flipattack}, GCG~\citep{zou2023universal}, Human Jailbreaks~\citep{shen2024doanythingnow}, Jailbreak-Zero~\citep{hu2026jailbreak}, PAIR~\citep{chao2025jailbreaking}, SimBAja~\citep{takemoto2024all}, and TAP~\citep{mehrotra2024tree}. The benchmarks contain harmful requests and jailbreak prompts. We use only victim-model representations and binary jailbreak labels, and the paper does not display any harmful prompt or response.

A \emph{configuration} is a combination of dataset, victim model, held-out source, and seed. For the source-detection experiments (Sections~\ref{sec:matched-regimes} and \ref{sec:separation} and Appendices~\ref{app:maintext-deferred}, \ref{app:latest-operating-audits}, and \ref{app:similarity}), we ran 67 leave-one-source-out jobs over 5 seeds, giving 335 configurations; the 5 configurations for AdvBench, Gemma, and held-out Distribution Jailbreak failed, which leaves 330 configurations. Table~\ref{tab:configs} lists them by dataset and victim model. The layered evaluation of Section~\ref{sec:defense} uses a separate registry of 50 configurations (Appendix~\ref{app:defense}).

\begin{table}[H]
\centering
\caption{Number of completed configurations in the source-detection experiments, by dataset and victim model.}
\label{tab:configs}
\footnotesize
\begin{tabular}{llc}
\toprule
Dataset & Victim model & Configurations \\
\midrule
AdvBench & Llama-2 & 55 \\
AdvBench & Qwen & 50 \\
AdvBench & Gemma & 50 \\
AdvBench & Llama-3.1-8B-Instruct & 55 \\
HarmBench & Llama-2 & 30 \\
HarmBench & Qwen & 30 \\
HarmBench & Gemma & 30 \\
HarmBench & Llama-3.1-8B-Instruct & 30 \\
\midrule
Total & & 330 \\
\bottomrule
\end{tabular}
\end{table}

The held-out source contributes no training, calibration, or validation data and appears only in the test streams. Under the short-horizon protocol, test streams sample 40 observations per segment from the test pool without replacement within a stream; the same prompt may appear in different streams. Confidence intervals for the original split use a bootstrap over held-out sources with 5000 replicates (seed 11).

The original split was designed after earlier KU results had been seen, so its evaluation is exploratory. The robustness resplit (Appendix~\ref{app:maintext-deferred}) uses the same selection rule, which gives $\widehat\tau=2.374$, and contains 2425 stationary, 17040 KK, and 2425 KU streams, with no failed runs. The two evaluations are never pooled.

\section{Layered Intervention Details}
\label{app:defense}

This section documents the layered evaluation of Section~\ref{sec:defense}. All numerical values in this section are the reported frozen-threshold snapshot; no recalibration under a common validation protocol is reported.

\subsection{Evaluation registry and data lineage}
\label{app:defense-registry}

The evaluation replays cached victim-model representations and original outcome labels on aligned streams. That is, the guardrail decision $G_t$, the monitor statistics, and the label $Y_t$ refer to the same prompt occurrence.

\paragraph{Configurations.}
All 50 configurations are AdvBench with the victim model Llama-2-7B-chat, indexed by held-out source and seed over seeds $\{0,7,11,23,42\}$ and the ten held-out sources ADC, AutoDAN-Turbo, Direct Request, Distribution Jailbreak, FlipAttack, GCG, Human Jailbreaks, Jailbreak-Zero, PAIR and SimBAja. The number of known sources is $K=7$ on 35 configurations and $K=8$ on 15 (Direct Request, Distribution Jailbreak and Human Jailbreaks, at all five seeds). They were selected by a gated first pass of six seed-11 configurations (ADC, AutoDAN-Turbo, FlipAttack, GCG, PAIR, SimBAja), which had to be complete, followed by expansion in lexicographic order over the remaining (held-out, seed) pairs until 50 complete caches were reached; TAP is next in that order and was never reached, so it does not appear. Configurations without a complete cache were skipped, and no complete configuration was dropped for an unfavorable KU outcome. This scope is narrower than the four-model source-detection study of Sections~\ref{sec:matched-regimes} and \ref{sec:separation}.

\paragraph{Streams.}
Streams are built with the same constructor as the source-detection study, with pre-switch segment length $n_{\rm each}=40$ in all 50 configurations. The run contains 365 stationary streams of length 40 with no switch, 2310 KK streams of length 80 covering every ordered pair of distinct known sources, and 365 KU streams of length 80, one per known source into the held-out source. Each KK and KU stream has 40 pre-switch observations, so in the one-based indexing of Section~\ref{sec:formulation}, where $\nu$ is the position of the first observation from the new source, the switch is at $\nu=41$; the saved runs record it as the change point $\mathrm{cp}=40$, the number of pre-switch observations. We use $\mathrm{cp}$ only when quoting saved quantities and $\nu$ everywhere else. Stream counts vary by configuration with $K$; the totals above are the realized 50-configuration sums. Test identifiers are drawn without replacement within a stream but independently across streams, so the same test prompt can appear in many streams.

\paragraph{The 1,729 denominator.}
The figure 1,729 counts aligned KU-stream positions whose jailbreak label is 1, summed over the 50 configurations, with a per-configuration range of 18 to 77. It counts prompt occurrences, not distinct prompt identifiers, and includes pre-switch known-source successes. The number of distinct prompts behind it is unresolved: per-position sample identifiers were not persisted in the saved result files.

\paragraph{Alignment.}
For each KU stream position, we match the guardrail decision, the monitor statistic, and the jailbreak label by the attack method and the behavior index of the prompt. The guardrail decisions come from the saved MTK and Prompt Guard~2 predictions, and the labels come from the cached jailbreak-success records. All 29{,}200 KU stream positions have both guardrail decisions, so no position is dropped. Note that this check does not verify the label quality or the uniqueness of prompts.

\paragraph{Data overlap.}
The evaluation reuses the AdvBench and Llama-2-7B-chat subset of the original split, with seeds $\{0,7,11,23,42\}$. Thus, it is not an independent data collection. The Self-Risk control uses the same 50 configurations without reselection. In addition, the replay keeps all cached observations, including those on which the guardrail intervened. Thus, unlike a live system, the monitor can still observe a request after the guardrail has blocked it.

\subsection{Policy, activation timing, and estimands}
\label{app:defense-estimands}

The combined action is $I_t^{(G+M)}=G_t\vee A_t^{(M)}$ with $A_t^{(M)}=\Ind\{t\ge T_{\rm act}^{(M)}\}$ (Section~\ref{sec:post-detection}). Escalation activates at the alarm itself, so $T_{\rm act}^{(M)}=T_{\rm alarm}$ and the alarmed prompt is itself escalated. If the monitor alarms before the KU switch, escalation stays on for every later position, including the whole post-switch suffix; such episodes are included in the KU intervention and extra-block metrics, but they are not counted as KU detections, which require a first alarm after the change point. An escalated prompt is counted as intercepted. This assumes that escalation prevents the harmful response, and it does not test a human review service. In addition, later prompts, representations, and labels stay those of the cached stream after the escalation. Thus, the replay does not model an attacker who reacts to the intervention.

For configuration $c$, the evaluation index set $\mathcal E_c$ contains every aligned KU-stream position whose jailbreak label is 1, both before and after the switch. It is the whole KU episode, not the post-switch part alone, and not the subset of successes left unblocked by the guardrail, which is only the numerator of the increment below. The blocking rate and its increment are
\[
B_c(G+M)=\frac{\sum_{i\in\mathcal E_c}Y_i I_i^{(G+M)}}{\sum_{i\in\mathcal E_c}Y_i},
\qquad
\Delta B_c=B_c(G+M)-B_c(G)=\frac{\sum_{i\in\mathcal E_c}Y_i(1-G_i)A_i^{(M)}}{\sum_{i\in\mathcal E_c}Y_i},
\]
where the second equality holds because the action is exactly $G\vee A$. Every term in that numerator is nonnegative, so $\Delta B_c\ge0$ by construction: under a monotone OR policy, adding any trigger can only add interceptions, never remove them. A positive $\Delta B$ therefore shows that the trigger fires on streams that still contain successful attacks, which is an operationally meaningful quantity, but it cannot by itself show that a source-aware trigger is more valuable than any other trigger with the same alarm budget. The comparison that would show this is to recalibrate Persist-3, SR-DKW, and Self-Risk on the same known-source validation streams, under the same alarm caps and the same escalation policy, freeze them, and only then evaluate KU streams. That experiment is not reported here, and the Self-Risk control in Appendix~\ref{app:defense-selfrisk} does not substitute for it because its calibration differs (Appendix~\ref{app:defense-audit}). The reported macro estimator is
\[
\Delta B_{\rm macro}=\sum_{c\in\mathcal C_+}w_c\,\Delta B_c,\qquad \sum_{c\in\mathcal C_+}w_c=1,
\]
where $\mathcal C_+$ is the set of configurations with a nonzero success denominator. The weights are uniform over configurations, $w_c=1/50$; the macro estimator is not weighted by the 1,729 occurrences or by the number of known sources. For an intervention domain $\mathcal D_c$, such as the prompts of the KU streams, the intervention increment is
\[
\Delta I_c=\frac{1}{|\mathcal D_c|}\sum_{i\in\mathcal D_c}\bigl(I_i^{(G+M)}-G_i\bigr).
\]
The pooled yield is $100\,N_{\rm extra,successful}/N_{\rm extra,intervened}$, with numerator and denominator taken from the same domain; a zero denominator is reported as N/A, not as zero. Known-source attack traffic in this evaluation is attack traffic, not benign traffic, so the burdens below are not false positive rates on benign users.

\subsection{Frozen operating points and validation audit}
\label{app:defense-audit}

Table~\ref{tab:layered-audit} reports the alarm rates of each trigger on the present known-source validation streams. These are trigger alarm rates, not intervention rates of the combined system: MTK and PG2 may already intervene on known-source streams, which is not a false alarm of the trigger. Self-Risk is calibrated on these streams under the two caps of $0.05$, whereas the thresholds of Persist-3 and SR-DKW are inherited from earlier experiments. The KK validation alarm rate of $0.103$ for Persist-3 therefore exceeds the cap imposed on Self-Risk, and the comparison in Section~\ref{sec:defense} is not budget-matched.

\begin{table}[H]
\centering
\caption{Validation audit of the triggers in the layered evaluation. ``Known-stream alarm'' is the fraction of validation streams from a single known source with a trigger alarm; ``KK-stream alarm'' is the same fraction for validation streams with a known-to-known switch.}
\label{tab:layered-audit}
\footnotesize
\begin{tabular}{lccc}
\toprule
Trigger & Known-stream alarm & KK-stream alarm & Threshold \\
\midrule
Self-Risk, MTK & 0.0015 & 0.000 & 0.310 \\
Self-Risk, PG2 & 0.0009 & 0.001 & 0.998 \\
Persist-3 (inherited) & 0.006 & 0.103 & 2.386 \\
SR-DKW (inherited) & 0.047 & 0.003 & 19.701 \\
\bottomrule
\end{tabular}
\end{table}

\subsection{Self-Risk control}
\label{app:defense-selfrisk}

The Self-Risk control uses the guardrail's own risk score, $r_t^{\rm MTK}=-\mathrm{mtk\_score}_t$ for MTK and $r_t^{\rm PG2}=\mathrm{pg2\_prob\_malicious}_t$ for PG2. It averages $r_t$ over the most recent $W=10$ observations and alarms after $L=3$ consecutive threshold crossings, as in Persist-3; it uses neither source-wise conformal evidence nor held-out KU outcomes. During startup it averages over the available prefix, $m_t=\min(W,t)$, with no burn-in period, and the persistence rule leaves the statistic undefined until three observations have arrived, so the earliest possible alarm is at $t=3$. MTK Self-Risk makes no additional successful interception and no additional KU prompt intervention, so its yield is N/A. PG2 Self-Risk makes two additional KU prompt interventions and no additional successful interception, so its yield is zero. Because of the calibration mismatch in Table~\ref{tab:layered-audit}, this result does not rule out that a differently calibrated temporal escalation rule would also add protection.

\subsection{Paired results and uncertainty}
\label{app:defense-counts}

Table~\ref{tab:layered-counts} reports pooled counts. The pooled fraction and the macro average differ: for example, $163/1729\approx0.094$ for MTK$+$Persist-3, whereas the macro-averaged gain in Table~\ref{tab:layered} is $0.074$, because the pooled ratio weights every successful occurrence equally while the macro average weights configurations by $w_c$. The reported intervals are 95\% percentile clustered-bootstrap intervals with 5000 replicates and seed 11, resampling the ten held-out attack sources as clusters and keeping all configurations of a source together. For the difference between the two monitors, the per-configuration values of the SR-DKW minus Persist-3 extra-block gain give a macro point estimate of $0.026$ for both MTK and PG2, but no interval was computed for that paired difference, so we make no paired significance claim. The separate intervals against the guardrail alone overlap (for MTK, $0.074\,[0.033,0.118]$ versus $0.100\,[0.041,0.161]$) and do not establish a difference between the monitors. The intervention sets of the two monitors are not assumed to be nested.

\begin{table}[H]
\centering
\caption{Pooled counts in the layered evaluation. ``Intercepted'' is the number of originally successful attack occurrences intercepted by the system, out of 1,729; ``Additional'' is the increase relative to the guardrail alone; ``Extra KU interventions'' is the number of additional prompt interventions in streams with a known-to-unknown switch; ``Yield'' is additional successful interceptions per 100 extra KU interventions.}
\label{tab:layered-counts}
\footnotesize
\begin{tabular}{lcccc}
\toprule
System & Intercepted & Additional & Extra KU interventions & Yield \\
\midrule
MTK & 682 & 0 & 0 & N/A \\
MTK $+$ Persist-3 & 845 & 163 & 2,240 & 7.28 \\
MTK $+$ SR-DKW & 915 & 233 & 3,043 & 7.66 \\
PG2 & 501 & 0 & 0 & N/A \\
PG2 $+$ Persist-3 & 682 & 181 & 3,403 & 5.32 \\
PG2 $+$ SR-DKW & 764 & 263 & 4,513 & 5.83 \\
\bottomrule
\end{tabular}
\end{table}

The KU detection rates of the monitors in this evaluation are $0.640$ for Persist-3 and $0.851$ for SR-DKW; they are computed on the layered streams and are not comparable to Table~\ref{tab:matched-main}.

\subsection{Historical standalone evaluation}
\label{app:defense-historical}

An earlier version of this paper compared Persist-3 with prompt-level defenses as standalone policies. That evaluation used a different protocol: AdvBench with Llama-2, 5720 prompts, 301 successful attack prompts, 940 streams, 55 leave-one-source-out configurations, and five seeds. After the first alarm, all remaining prompts in the stream are escalated. We keep it here for completeness. It is not an independent confirmation of Section~\ref{sec:defense}, and it is not the baseline of the layered gains in Table~\ref{tab:layered}.

\begin{table}[H]
\centering
\caption{Historical standalone comparison (55 configurations). Conditional defense success is the fraction of previously successful attack prompts blocked by the intervention policy; it is not residual ASR or an ASR reduction. Episode intervention is the fraction of streams containing an unseen source in which the policy intervenes. First intervention is the median position of the first intervention. Known burden and KK burden are the fractions of streams from a single known source and of KK streams, respectively, in which the policy intervenes.}
\label{tab:defense-historical}
\footnotesize
\begin{tabular}{lccccc}
\toprule
Method &
Conditional defense success &
Episode intervention &
First intervention &
Known burden &
KK burden \\
\midrule
Persist-3 & 0.675 & 0.875 & 31 & 0.009 & 0.124 \\
MTK & 0.410 & 1.000 & 1 & 0.708 & 0.708 \\
Prompt Guard 2 & 0.304 & 1.000 & 2 & 0.458 & 0.458 \\
\bottomrule
\end{tabular}
\end{table}

There are two observations. First, as standalone policies, Persist-3 blocks a larger fraction of previously successful attack prompts than MTK and Prompt Guard 2 ($0.675$ versus $0.410$ and $0.304$) and intervenes less often on streams from a single known source ($0.009$ versus $0.708$ and $0.458$). Second, it acts much later (median first intervention at position 31, versus 1 and 2), and its KK burden of $0.124$ is non-negligible. The known burdens concern attack streams from known sources, not benign users, so this comparison does not establish that Persist-3 can replace a guardrail or that it has a low burden on benign traffic.

We also compared Persist-3 with the \emph{historical defense single-crossing OUT-WIN}, an earlier detector used in this evaluation. It uses the same statistic $C_t^{\mathrm{fam}}$, but a separate split of the defense data, a threshold $\tau_k$ set per configuration at the 99th percentile of validation values, a strict crossing rule $C_t^{\mathrm{fam}}>\tau_k$, and the same escalation policy after the alarm. Relative to it, Persist-3 increases conditional defense success by $0.234$ $[0.060,0.489]$ and changes the known burden by $-0.056$ $[-0.070,-0.044]$, but increases the KK burden by $0.100$ $[0.068,0.119]$. Thus, Persist-3 improves defense success and reduces interventions on known sources, but not every intervention burden.
\end{document}